\newsavebox{\box@tikzpicture}
    \pgfmathsetmacro\width@scale@picture{#2/\wd\box@tikzpicture}%
\newcommand{\calV}{\mathcal{V}}
\newcommand{\calS}{\mathcal{S}}
\newcommand{\Sin}{\mathcal{S}_{\text{in}}}
\newcommand{\Sout}{\mathcal{S}_{\text{out}}}
\newcommand{\ins}{\text{in}}
\newcommand{\out}{\text{out}}
\definecolor{blueblack}{rgb}{0,0,.7}
\newcommand*{\customnum}[1]{%
  \expandafter\@customnum\csname c@#1\endcsname%
}
\newcommand*{\@customnum}[1]{%
  $\ifcase#1\or4'
    \else\@ctrerr\fi$%
}
\AddEnumerateCounter{\customnum}{\@customnum}{53.13}
\theoremstyle{plain}
\newtheorem{theorem}{Theorem}[section]
\newtheorem{Definition}[theorem]{Definition}
\newtheorem{fact}[theorem]{Fact}
\newtheorem{claim}[theorem]{Claim}
\newtheorem{lemma}[theorem]{Lemma}
\newcommand{\maxrad}{r_q}
\newcommand{\cost}{cost}
\newcommand{\dist}{\text{dist}}
\newcommand{\price}{price}
\newcommand{\executeiffilenewer}[3]{%
\ifnum\pdfstrcmp{\pdffilemoddate{#1}}%
{\pdffilemoddate{#2}}>0%
{\immediate\write18{#3}}\fi%
}
\newcounter{sideremark}
\def\eg{{\it e.g.,}~}
\def\ie{{\it i.e.},~}
\newcommand{\eps}{\varepsilon}
\renewcommand{\epsilon}{\varepsilon}
 \def\cramped
\def\opt{{\text{OPT}}}
\newcommand{\R}{\mathbb{R}}
\DeclareMathOperator*{\argmax}{argmax}
\title{The Bane of Low-Dimensionality Clustering\thanks{      The project leading to this application has received funding from the European Union’s Horizon 2020 research and innovation programme under the Marie Sklodowska-Curie grant agreement No. 748094.  The work of A. de Mesmay is partially supported by the French ANR project ANR-16-CE40-0009-01 (GATO). The work of A. Roytman is partially supported by Thorup's Advanced Grant DFF-0602-02499B
from the Danish Council for Independent Research.
    }}
\author[1,3]{Vincent Cohen-Addad}
\author[2]{Arnaud de Mesmay}
\author[1]{Eva Rotenberg}
\author[1]{Alan Roytman}
\affil[1]{Department of Computer Science, University of Copenhagen, Denmark}
\affil[2]{Univ. Grenoble Alpes, CNRS, Grenoble INP, GIPSA-lab, 38000 Grenoble, France}
\affil[3]{Sorbonne Universit\'es, UPMC Univ Paris 06, CNRS, LIP6, Paris, France}
\date{}
\begin{document}
\maketitle

\begin{abstract}
In this paper, we give a conditional lower bound of $n^{\Omega(k)}$ on running time for 
the classic $k$-median and $k$-means clustering objectives (where $n$ is the size of the input), even in low-dimensional
Euclidean space of dimension four,
assuming the Exponential Time Hypothesis (ETH). We also consider $k$-median (and $k$-means) with penalties where each point need not be assigned to
a center, in which case it must pay a penalty, 
and extend our lower bound to at least three-dimensional Euclidean space.

This stands in stark contrast to many other geometric problems such as the
traveling salesman problem, or computing an independent set of unit
spheres. While these problems benefit from the so-called (limited) blessing
of dimensionality, as they can be solved in time $n^{O(k^{1-1/d})}$ or
$2^{n^{1-1/d}}$ in $d$ dimensions, our work shows that widely-used clustering objectives
have a lower bound of $n^{\Omega(k)}$, even in dimension four.

We complete the picture by considering the two-dimensional case: we show
that there is no algorithm that solves the penalized version in time less than $n^{o(\sqrt{k})}$, and provide a matching upper bound of $n^{O(\sqrt{k})}$.

The main tool we use to establish these lower bounds is the placement
of points on the moment curve, which takes its inspiration from
constructions of point sets yielding Delaunay complexes of high
complexity.

\end{abstract}
\setcounter{page}{0}
\thispagestyle{empty}
\newpage

\section{Introduction}
The fundamental $k$-median problem has led to several
important algorithmic
results since the beginning of its study in the 1970s~\cite{sahni1976p}.
It has consistently received  attention from both practitioners and
theoreticians, and there is now a vast literature on the problem in different
settings, such as streaming, fixed-parameter tractability (FPT), and beyond worst-case 
analysis.

Given a set of points (or clients) and a set of candidate centers, the $k$-median problem 
asks for a subset of $k$ candidate centers that minimizes the sum of distances
from each point to its closest center.\footnote{We consider the Euclidean setting
in which the number of candidate centers is polynomial in the number of clients (which is finite
and part of the input). The more general setting where candidate centers can be opened anywhere can be reduced to this with a multiplicative loss in the cost of at most $1+1/\text{poly}(n)$, see~\cite{Mat00}. 
Moreover, even for the two-dimensional 1-median problem, there are known instances where the optimal position 
of the center cannot be described by radicals over the field 
of rationals~\cite{Bajaj1988}, so this assumption is quite common.
} This induces a partitioning of the points
where points in the same group are close to each other.
Such a partitioning finds various applications,
including facility location, image compression~\cite{KMNPSW04},
and community detection. To obtain a more accurate model
of the underlying applications, many variants of the $k$-median problem have 
also been studied. Arguably the most famous are those where the objective functions 
allow the discarding of data points that are irrelevant from the application's perspective.
These variants were introduced by Charikar et al.~\cite{charikar01} 
and referred to as $k$-median with penalties and 
$k$-median with outliers. Another example is the $k$-means objective which 
consists of minimizing the sum of squared distances. This is frequently used in models where 
the goal is to recover mixtures of $k$ Gaussians, a popular problem
in machine learning.

In this paper, we consider $k$ as a fixed parameter and aim at giving tight
upper and lower bounds for the $k$-median, $k$-means, and $k$-median with penalties problems
regarding running time. 

\paragraph{The Parameter $k$:}
The choice of parameterizing by $k$ is a very natural approach when
dealing with issues of tractability. 
Many real-world examples 
involve solving instances of the $k$-median problem 
in low-dimensional Euclidean space. 
A concrete example stemming from machine learning 
is the classic digits dataset (see~\cite{Lichman:2013}), which
consists of images of hand-written digits. Successful approaches for
obtaining a good classification consist of applying an SVD algorithm (\ie singular value decomposition)
to the dataset and solving a three-dimensional $k$-median (or $k$-means) 
instance with $k=10$ (see, \eg~\cite{scikit-learn}).

Other examples include the widely-used hierarchical clustering heuristic
Bisection $k$-means (see~\cite{steinbach2000comparison}), which consists of
recursively dividing a set of points in $d$-dimensional Euclidean space 
using the $k$-median or $k$-means objectives for values of $k \le 10$.

Therefore, as early as the 1990s, the $k$-median and $k$-means problems
have received a great deal of attention from
the theory community, which has tried to 
obtain efficient approximation algorithms for the Euclidean setting. 
Since the work of~\cite{VegaKKR03}, there has been a long line of research 
on $(1+\eps)$-approximation algorithms running in time $f(k,\eps) \text{poly}(n,d)$ for $\epsilon > 0$
(see~\cite{Kumar2004,kumar2005linear,badoiu,FMS07,FeL11,ackermann2008clustering}).
The best algorithm known for $k$-median is due to~\cite{KSS10}, which achieves a $(1+\eps)$-approximation
in time $2^{(k/\eps)^{O(1)}} n \cdot d$, and for $k$-means the best known is due to
Feldman et al.~\cite{FMS07}, which achieves a $(1+\eps)$-approximation in time
$O(nkd + \textrm{poly}(k/\eps)d + 2^{\tilde{O}(k/\eps)})$. While the design of approximation schemes
is fairly well understood for $k$-median and $k$-means when parameterized by $k$, 
the brute-force approach of trying all possible $k$ subsets of candidate centers stubbornly stands as
the best exact algorithm known 
(hence\footnote{Assuming, for the $k$-median upper bound, that fast comparisons of sums of square roots are possible.} a running time of $n^{O(k)}$). 
Thus, obtaining a better bound, even for low-dimensional inputs, is a natural and
important open question. 

This question is further motivated by recent results showing that many famous 
problems (\eg the traveling salesman problem 
or finding a size $k$ independent set of unit spheres)
benefit from the ``$(1-1/d)$ phenomenon,'' namely that there exist exact algorithms 
running in time $n^{O(k^{1-1/d})}$ or $2^{O(n^{1-1/d})}$ (see~\cite{MS14}). 
As Marx and Sidiropoulos showed~\cite{MS14}, this is often tight assuming the Exponential
Time Hypothesis (ETH).
Hence, understanding whether this phenomenon applies to clustering strengthens 
the motivation of studying the $k$-median problem with $k$ as a fixed parameter.

\paragraph{Our Results:}
We show that, quite surprisingly, clustering is hard even in 
Euclidean space of dimension four. Namely,
there is no $f(k)n^{o(k)}$-time algorithm for any computable function $f$ 
for $k$-median or $k$-means unless 
the ETH assumption fails  (Theorem~\ref{thm:pc:4d}). 
For the $k$-median with penalties problem
we show that this hardness bound holds even in $\R^3$ 
(Theorem~\ref{thm:pc:3d}),
and that the hardness becomes $f(k)n^{o(\sqrt{k})}$ in $\R^2$
(Theorem~\ref{thm:pc:2d}).

On the positive side, we give an $n^{O(\sqrt{k})}$-time exact
algorithm in two dimensions for both problems using standard
techniques (Theorem~\ref{thm:upperbound:2d}), and hence provide a
complete characterization of the complexity of the $k$-median with
penalties problem.  Interestingly, this shows a steep gap between the
two-dimensional case and the three-dimensional setting (for $k$-median
with penalties) and the four-dimensional case (for $k$-median).  For
the $k$-median and $k$-median with penalties problems, we assume a
computational model in which sums of square roots can be compared
efficiently, which is a common assumption for geometric problems in
Euclidean space (see for example Gibson et
al.~\cite{gibson2008clustering}).

We note that all of our
results extend to objectives where distances are taken to some power $p$
(for $p=1$ and $p=2$, this yields the $k$-median and $k$-means objectives, respectively).
Moreover, our hardness results do not generalize to versions of the problems where any point
in Euclidean space can serve as a center.  That is, our results only hold for settings where
the set of potential candidate centers is explicitly given as input.

\subsection*{Related Work}
The $k$-median and $k$-means problems
are NP-hard, even in the Euclidean plane (see Meggido and 
Supowit~\cite{MeS84}, Mahajan et al.~\cite{MNV12}, and Dasgupta and 
Freud~\cite{DaF09}). 
This hardness extends to approximation: 
both problems are APX-hard  in the Euclidean setting 
when both $k$ and $d$ are part of the input 
(see Guha and Khuller~\cite{GuK99}, 
Jain et al.~\cite{JMS02}, 
Guruswami et al.~\cite{GI03}, and Awasthi et al.~\cite{ACKS15}).
When $d$ is fixed, however, the problems are no longer APX-hard~\cite{CAKM16,ARR98}.
There has been a large body of work on obtaining 
constant factor approximations for both the $k$-median and 
$k$-means problems (see~\cite{ANSW16,BPRST15,LiS13,JaV01,MeP03}).
The best approximation ratio known for $k$-median in general metric spaces
is due to Byrka et al.~\cite{BPRST15} and is $\approx 2.675$.
For the $k$-means problem, the best known is now 6.357 due to
Ahmadian et al.~\cite{ANSW16}, where they
improved upon the 9-approximation algorithm of Kanungo et al.~\cite{KMNPSW04}.

The literature on fixed-parameter tractability is vast.  We only discuss
the most related works (for a more thorough treatment, see~\cite{CFKLMPPS15}).

\paragraph{Fixed-Parameter Tractability for Fixed $k$.}
There has been a long line of work on $(1+\eps)$-approximation algorithms
for Euclidean $k$-median parameterized by $k$,
\eg~\cite{FeL11,KSS10,HaK07,HaM04}. Many of these works are based
on the notion of a coreset: a representation of the input of size 
poly$(k,\eps)$. There are various algorithms to efficiently compute 
coresets. Once a coreset is computed, the best solution for the coreset
can be found in FPT time (\ie $f(k,\eps)$poly$(n)$).
The best approach known for Euclidean $k$-median runs in time $2^{(k/\eps)^{O(1)}} nd$, due to Kumar 
et al.~\cite{KSS10}.  For Euclidean $k$-means, the best approach known runs in time
$O(nkd + \textrm{poly}(k/\eps)d + 2^{\tilde{O}(k/\eps)})$, due to Feldman et al.~\cite{FMS07}.

\paragraph{Fixed-Parameter Tractability for Fixed $d$.} The choice of $d$ as a parameter
has also been studied. In this case, polynomial time approximation schemes (PTAS) are known 
for both the $k$-median and 
$k$-means problems~\cite{ARR98,CAKM16,FRS16a,KoR07}.
For the $k$-center problem, a lower bound of $n^{o(d)}$ on the running time is known
even when $k = 2$~\cite{cabello2008geometric}. Unfortunately, the 
$k$-center objective (which is a min max objective) 
is quite different from the $k$-median and $k$-means
objectives (which are min sum objectives). Hence, no hardness bound
is known for the $k$-median and $k$-means problems when parameterized
by $d$.

\subsection{Roadmap}
In Section~\ref{sec:prelim}, we introduce some preliminaries. 
In Section~\ref{sec:kmed:genmetric}, we provide some intuition for our main reductions
by giving a simple hardness proof for $k$-median in general metric spaces.
In Section~\ref{sec:pc:3d}, we show hardness of the penalized version of $k$-median in $\mathbb{R}^d$ for $d \geq 3$. 
In Section~\ref{sec:dimhardness}, we show hardness of $k$-median in $\mathbb{R}^d$ for $d\geq 4$.
In Section~\ref{sec:2dhardness}, we show hardness of the penalized version of $k$-median in the two-dimensional case.
Finally, in Section~\ref{sec:2d}, we show an upper bound in the two-dimensional setting for both problems.

\subsection{Overview of Ideas and Techniques}
\paragraph{Lower bounds of the form $f(k)n^{\Omega(k)}$:}
We begin with a straightforward reduction from the Partial
Vertex Cover problem that rules out an $f(k)n^{o(k)}$-time algorithm for $k$-median in general metrics under ETH
(for any computable function $f$). Our observation is the following: obliviously to the parameter $k$, a graph can be represented
as an instance with a candidate for each vertex, and a client for each edge. We set the distance from an edge
to its endpoints to $1$, and its distance to all other vertices to something strictly larger, say, $3$. Then, the number
of covered edges can be read off directly from the cost. 

Unfortunately, the metric example above does not embed well in small dimensions. However, the idea of letting vertices correspond to candidates
and edges to clients can still be made to work.   The first challenge is to place the edges (clients) so that they are closer to their endpoints (candidates)
than to any other candidate. Geometrically, this requires placing the candidates in such a way that the Voronoi cells of any two candidates intersect, so that we can place the clients at the intersections of these cells. Dually, this amounts to finding point sets inducing a Delaunay complex in which its $1$-skeleton is a complete graph. While this is impossible in two dimensions, since the Delaunay complex is a triangulation and is thus sparse, higher dimensions allow for this quite pathological behavior. This is a classic topic in computational geometry (see Erickson~\cite{Erickson2003} and the references therein), and one elegant construction~\cite{seidel1990exact} exhibiting this phenomenon is to place the points on the moment curve $t\mapsto(t,t^2, \ldots, t^d)$, which is what we do in our paper.

For the version with penalties, three dimensions are enough to obtain a lower bound. Here, we prove that for any two values
$t_a,t_b$ that parameterize two vertices $a,b$ (where $(a,b)$ is an edge) on the moment curve,
there is a unique sphere $\mathbb{S}$ tangential to the points on the curve $t = t_a$ and $t=t_b$ that has the entire moment curve exterior
to it. We want to place the client point (corresponding to the edge $(a,b)$) at or near the center of $\mathbb{S}$.
In fact, we can give a little slack, and not consider a tangential sphere, but rather the sphere going through $(t_a,t_a^2,t_a^3)$, $(t_b,t_b^2,t_b^3)$, and two ``dummy''
points on the moment curve placed closely to them.
The difference between the radii of the spheres creates a disparity in the contribution of each covered client (\ie covered edge) to the objective,
which we handle by placing many clients at each center (thus nearly equalizing their contribution). 
Finally, naturally, the associated penalty for an edge is set to be only slightly larger than the radius of the corresponding ball.

For the version without penalties, the task is slightly more challenging:
we need to make sure that each edge is equally costly to ``not cover.'' 
To handle the challenge of uncovered edges, we construct a universal special candidate $z$ that is only slightly farther away from every edge than the two candidates
corresponding to the edge's endpoints. This additional candidate requires us to add an additional dimension to our construction, raising it to four.
Considering the moment curve $m(t)=(t,t^2,t^3,t^4)$ in $\mathbb{R}^4$, the unique sphere through $m(t_z)$ (corresponding to $z$) and tangential
to the later points $m(t_a),m(t_b)$ with $t_a,t_b>t_z$ is such that the moment curve after $m(t_z)$ is exterior to the sphere. 
We may thus choose $z=(1,1,1,1)$ and let all other vertices correspond to points $t>1$. However, placing the edges at the exact centers
of the spheres will not give us any information, as all edges could then be served optimally by $z$. Thus as a final step, we place each
edge near the center, but slightly farther from $z$.

\paragraph{Lower bound in two dimensions:} The lower bounds in two dimensions are a reduction from the Grid Tiling problem using techniques from~\cite{MS14}. 
The main observation is the following: imagine you have uncountably infinitely many clients placed uniformly within a region. If all candidates have the same radius $1$ and the same penalty, then it is always an advantage if the $1$-balls around the chosen candidates overlap as little as possible -- preferably not at all. We can precompute the cost $\nu$ for non-overlapping balls, which is strictly smaller than the cost of any solution where balls overlap. Then, the instance to Grid Tiling has a solution if and only if the constructed $k$-median with penalties
instance has a solution with cost $\leq\nu$. (In fact, exactly $\nu$.)

\paragraph{Upper bound in two dimensions:}
Our upper bounds in two dimensions use the strategy of guessing a separator of size $\sqrt{k}$ in the Voronoi diagram of an optimal solution.
This is quite a useful approach, as illustrated by Marx and Pilipczuk~\cite{MarxP15}.
Since this is quite standard, we defer this result to Section~\ref{sec:2d}.

\section{Preliminaries}\label{sec:prelim}

We frequently use the moment curve throughout our reductions, which we define as follows.

\begin{Definition}
    The curve $\mathbb{R}^+\to \mathbb{R}^d$ defined by $t\mapsto (t,t^2,\ldots ,t^d)$ is called the moment curve.
\end{Definition}

All of our lower bounds are conditioned on the Exponential Time Hypothesis (ETH), which
was conjectured in~\cite{IPZ98}.

\begin{Definition}[Exponential Time Hypothesis(ETH)~\cite{IPZ98}]
There exists a positive real value $s > 0$ such that 3-CNF-SAT,
parameterized by $n$, has no $2^{sn}(n+m)^{O(1)}$-time algorithm (where $n$ denotes the number of variables
and $m$ denotes the number of clauses).
\end{Definition}

The following problem, Partial Vertex Cover, plays a critical role in our reductions.  In particular, we
reduce from this problem to show hardness for $k$-median in $d \geq 4$ dimensions, and $k$-median 
with penalties in $d \geq 3$ dimensions.
\begin{Definition}[Partial Vertex Cover (PVC)]~\\
  \textbf{Input:} A graph $G=(V,E)$, an integer $s \in \mathds{N}$.\\
  \textbf{Parameter:} Integer $k$.\\
  \textbf{Output:}  \texttt{YES} if and only if there exists 
  a set of $k$ vertices  that covers at least $s$ edges.
\end{Definition}

Guo et al.~\cite{PVChard} showed that Partial Vertex Cover is W[1]-hard, but their reduction actually yields a lower bound conditional on ETH. Indeed, they reduced from Independent Set, which is known not to be solvable in time $f(k)n^{o(k)}$ assuming ETH~\cite[Theorem~14.21]{CFKLMPPS15}, and their reduction does not induce blow-up in the size of the parameter. Hence, they actually proved the following lower bound.

\begin{theorem}[PVC Hardness~\cite{PVChard}]
  There is no $f(k)n^{o(k)}$-time algorithm for the Partial Vertex Cover
  problem unless ETH fails (for any computable function $f$), where $n$ is the size of the input.
\end{theorem}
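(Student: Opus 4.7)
The plan is to give a polynomial-time reduction from Independent Set to Partial Vertex Cover that keeps the parameter $k$ unchanged. Combined with the ETH-based lower bound for Independent Set cited in the excerpt (Theorem~14.21 in Cygan et al.), this will rule out an $f(k)n^{o(k)}$-time algorithm for PVC: any such algorithm, composed with the reduction, would solve Independent Set in time $f(k)n^{o(k)}$, contradicting ETH.

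The central observation is a combinatorial identity: for any graph $G=(V,E)$ and any $S\subseteq V$, the number of edges of $G$ covered by $S$ equals $\sum_{v\in S}\deg(v) - |E(G[S])|$, where $G[S]$ is the induced subgraph on $S$, since edges with both endpoints in $S$ are counted twice in the degree sum. When $G$ is $d$-regular and $|S|=k$, this simplifies to $kd - |E(G[S])|$, which is at most $kd$ with equality precisely when $S$ is an independent set. Hence, on a $d$-regular graph $G$, deciding whether $G$ has an independent set of size $k$ is equivalent to deciding whether $(G,k,s=kd)$ is a YES-instance of PVC.

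Based on this, the steps are: (i) start from an Independent Set instance $(G,k)$ and reduce to Independent Set on a regular graph $G'$ via a standard parameter-preserving regularization, for instance taking disjoint copies of $G$ and attaching degree-equalizing gadgets that cannot participate in an independent set of size larger than before; (ii) output the PVC instance $(G',k,kd)$, where $d$ is the (fixed) degree of $G'$; (iii) invoke the combinatorial identity to conclude equivalence between the two instances. The entire reduction runs in polynomial time and leaves $k$ unchanged, so the ETH lower bound for Independent Set transfers directly to PVC.

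The main obstacle is step (i): exhibiting a regularization that preserves $k$ exactly, does not introduce spurious independent sets of size $k$, and blows up the instance only polynomially. This is routine in parameterized complexity but requires a careful gadget (and is essentially what Guo, Niedermeier, and Wernicke do in their original W[1]-hardness proof for PVC); everything else in the argument is immediate from the combinatorial identity above and the cited ETH lower bound.
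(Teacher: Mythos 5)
Your approach is essentially the same as the paper's: the paper treats this statement as a cited theorem, observing that Guo, Niedermeier, and Wernicke's W[1]-hardness reduction from Independent Set to Partial Vertex Cover is polynomial and parameter-preserving, so the ETH lower bound for Independent Set (Theorem 14.21 in Cygan et al.) transfers directly. You are reconstructing that reduction, and your degree-sum identity --- the number of edges covered by $S$ equals $\sum_{v\in S}\deg(v)-|E(G[S])|$ --- is precisely its combinatorial heart.

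The gap is in step (i). A parameter-preserving reduction from Independent Set to Independent Set \emph{on regular graphs} is not routine, and it is not what Guo et al. do. Any $\Delta$-regular graph on $N$ vertices has an independent set of size at least $N/(\Delta+1)$; to keep $\alpha(G')<k$ in a no-instance while only blowing $N$ up polynomially you would be forced to take $\Delta$ on the order of $N/k$, and the simple degree-equalizers you gesture at (pendants, attached cliques, disjoint copies) either create spurious independent sets of size $k$ or fail to leave the graph regular. The actual reduction goes directly from Independent Set to PVC in one step: given $(G,k)$ with maximum degree $\Delta\geq 2$, attach $\Delta-\deg(v)$ pendant leaves to each $v$, and set $s=k\Delta$. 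The resulting graph $G'$ is \emph{not} regular --- only the original vertices have degree $\Delta$, the pendants have degree $1$. Your identity, applied to $S'\cap V(G)$, together with the observation that each pendant in $S'$ contributes at most one newly covered edge while occupying a slot worth $\Delta\geq 2$, forces any size-$k$ set covering $k\Delta$ edges in $G'$ to consist entirely of original vertices inducing no edge of $G$. So the identity is used on a non-regular graph, restricted to its high-degree part; the intermediate regular Independent Set instance you posit is both unnecessary and (as a parameter-preserving target) not achievable by the gadgets you describe. Collapsing your two steps into this single direct reduction repairs the plan.
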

We now give our definitions for the clustering problems we consider in this paper, beginning
with the version without penalties.

\begin{Definition}[$d$-Dimensional $k$-Median]~\\
\textbf{Input:} A set of candidate centers $C \subset \R^d$,
  a set of clients $A \subset \R^d$, a cost $\nu \in \mathbb{Q}$.\\
  \textbf{Parameter:} Integer $k$.\\
  \textbf{Output:} \texttt{YES} if and only if there exists a set
  $S$ of $k$ candidate centers such that
  $$\sum_{a \in A} d(a, S) \le \nu.$$
  \label{def:kmed}
\end{Definition}
Here, the distance of a point $a \in \R^d$ to a set $S$ is the minimum distance from $a$
to any point in the set $S$ (\ie $d(a,S) = \min_{c \in S} d(a,c)$).
Unless stated otherwise, we use $n$ to denote the size of the input to the problem.
In addition, we note that our results extend to objective functions where distances are taken to some power $p$,
namely $d(a,S)^p$.  The important special cases of $p=1$ and $p=2$ yield the $k$-median and $k$-means objectives, respectively.
We now consider a slightly more general version of the $k$-median problem,
see also~\cite{charikar01} for previous definitions.

\begin{Definition}[$d$-Dimensional $k$-Median with Penalties]~\\
  \textbf{Input:} A set of candidate centers $C \subset \R^d$,
  a set of clients $A \subset \R^d$, a penalty $p_a$ for each $a \in A$,
  a cost $\nu \in \mathbb{Q}$.\\
  \textbf{Parameter:} Integer $k$.\\
  \textbf{Output:} \texttt{YES} if and only if there exists a set
  $S$ of $k$ candidate centers such that
  $$\sum_{a \in A} \min(d(a, S),p_a) \le \nu.$$
\end{Definition}
In our reductions, we sometimes set the cost threshold $\nu$ to be an irrational number.
We can remedy this issue since, in our reductions, there is always a large gap between the most costly
yes-instances and the least costly no-instances (in particular, the gap is at least an inverse polynomial).
Hence, we can always choose a rational number strictly larger than $\nu$ (but smaller than the least
costly no-instance) such that our reductions take time polynomial in the size of the input.
By size of the input, we refer to the number of bits it takes to represent the candidate centers,
client points, and cost bound (for the penalty version, the size of the input also includes the
bits used to represent the penalty amounts).

\subsection{Properties of the Moment Curve}
We first prove the following
property regarding ($3$-)spheres and the moment curve, which will be useful in our reduction. The proofs, in particular the use of Descartes' rule of signs~\cite{curtiss}, follow the exposition of Edelsbrunner~\cite[Section~4.5]{edelsbrunner2014short}.

\begin{lemma}\label{L:descartes}
Fix any $5$ positive values $0 < t_1 < t_2 < t_3 < t_4 < t_5$, and consider the corresponding $5$ points
that lie on the moment curve given by $(t_i,t_i^2,t_i^3,t_i^4)$ for $1 \leq i \leq 5$.  Then the unique $3$-sphere
that goes through these $5$ points satisfies the following property: the segments on the moment curve
corresponding to $t \in (t_1,t_2) \cup (t_3,t_4) \cup (t_5, \infty)$ all lie outside of the sphere (\ie
the distance of all such points from the center of the $3$-sphere is strictly more than its radius).
\end{lemma}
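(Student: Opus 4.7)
Here is my plan.

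The natural object to analyze is the polynomial
$$P(t) = \|m(t)-c\|^2 - r^2 = (t-c_1)^2 + (t^2-c_2)^2 + (t^3-c_3)^2 + (t^4-c_4)^2 - r^2,$$
where $c=(c_1,c_2,c_3,c_4)$ and $r$ denote the center and radius of the unique $3$-sphere through the five points. A direct expansion gives a polynomial of degree exactly $8$, with leading coefficient $1$, constant term $K:=\|c\|^2-r^2$, and the crucial structural feature that the coefficients of $t^8$ and $t^6$ both equal $1$ while the coefficients of $t^7$ and $t^5$ are identically $0$. Being outside the sphere is equivalent to $P(t)>0$, so the entire lemma amounts to a sign analysis of $P$ on $(0,\infty)$.

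Next I would exploit the five known roots by writing $P(t) = \Pi(t)\,q(t)$, where $\Pi(t)=\prod_{i=1}^5(t-t_i)$ is monic of degree $5$ and $q(t)=t^3+bt^2+ct+d$ is monic of degree $3$. Denote by $\sigma_1,\ldots,\sigma_5$ the elementary symmetric polynomials in $t_1,\ldots,t_5$, which are all strictly positive since the $t_i$ are. Matching the coefficient of $t^7$ in $P$ (which is $0$) to that of $\Pi\cdot q$ yields $b=\sigma_1$. Matching the coefficient of $t^6$ (which is $1$) yields $c = 1 + \sigma_1^2 - \sigma_2$, and since $\sigma_1^2 - \sigma_2 = \sum_i t_i^2 + \sum_{i<j}t_it_j > 0$, we obtain $c>1>0$. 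Finally, matching the coefficient of $t^5$ (which is $0$) and substituting the values already found yields
$$d = c\sigma_1 - \sigma_1\sigma_2 + \sigma_3 = \sigma_1\bigl(1 + \sigma_1^2 - 2\sigma_2\bigr) + \sigma_3 = \sigma_1\Bigl(1 + \textstyle\sum_i t_i^2\Bigr) + \sigma_3 > 0.$$

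With $b,c,d>0$ in hand, Descartes' rule of signs applied to $q$ produces zero sign changes, so $q$ has no positive real roots; combined with $q(0)=d>0$ this gives $q(t)>0$ for all $t\ge 0$. Therefore on $(0,\infty)$ the sign of $P$ is precisely the sign of $\Pi$. A routine count of the number of negative factors of $\Pi(t)=\prod(t-t_i)$ shows that $\Pi(t)>0$ exactly on $(t_1,t_2)\cup(t_3,t_4)\cup(t_5,\infty)$, which is the desired conclusion.

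The calculations are short but the step I expect to require the most care is the coefficient matching for $t^5$, since it involves all four unknowns $\sigma_1,\sigma_2,\sigma_3$ and $c$ simultaneously; the rest is bookkeeping, and the positivity of $b,c,d$ is precisely where the assumption $t_i>0$ is used. Once these positivities are established, Descartes' rule closes the argument immediately, matching the style of Edelsbrunner's exposition cited in the lemma.
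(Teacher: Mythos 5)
Your proposal is correct, and it reaches the same conclusion as the paper through a genuinely different (and in one sense cleaner) route. The paper applies Descartes' rule of signs directly to $P(t) = \|m(t)-c\|^2 - r^2$: since the $t^7$ and $t^5$ coefficients vanish and the $t^8,t^6$ coefficients equal $1$, the nonzero coefficient sequence admits at most five sign changes, so $P$ has at most five positive roots; as $t_1,\ldots,t_5$ are already five such roots, they are all of them and each is simple (Descartes counts with multiplicity). The sign of $P$ on $(0,\infty)$ is then determined by following the curve from $t>t_5$, where $P>0$ because $P\to+\infty$, back through the roots, flipping sign at each simple root.

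Your argument instead divides out the known zeros, writing $P=\Pi\cdot q$ with $\Pi=\prod_{i=1}^5(t-t_i)$ and $q$ a monic cubic, and uses the same vanishing of the $t^7,t^5$ coefficients (together with the $t^6$ coefficient being $1$) to solve for $q$'s coefficients in terms of the elementary symmetric polynomials $\sigma_i>0$. The resulting identities $b=\sigma_1$, $c=1+\sigma_1^2-\sigma_2=1+\sum t_i^2+\sum_{i<j}t_it_j$, and $d=\sigma_1\bigl(1+\sum t_i^2\bigr)+\sigma_3$ are all manifestly positive, so $q>0$ on $[0,\infty)$ and $\operatorname{sign}P=\operatorname{sign}\Pi$ there; reading off the sign of $\Pi$ finishes the proof. (For what it's worth, once $b,c,d>0$ one does not even need to invoke Descartes for $q$: a monic cubic with all coefficients positive is obviously positive for $t\ge 0$.) What your route buys is a completely explicit description of where the curve is inside versus outside the sphere, reduced to the transparent sign of $\Pi$, at the cost of the coefficient-matching computation; the paper's route is less computational but slightly more delicate, since it must separately argue simplicity of the roots and then walk backward along the curve. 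Both hinge on the same structural fact — the absence of $t^7$ and $t^5$ terms in the expansion — and your verification of that fact and of the positivity of $b,c,d$ is correct.
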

\begin{proof}
Consider any such set of $5$ positive values $t_i >0$ and their corresponding $5$ points on the moment curve given by $p_i = (t_i,t_i^2,t_i^3,t_i^4)$ for $1 \leq i \leq 5$.  These $5$ points
on the moment curve define a unique $3$-sphere in $\R^4$, with center $(a,b,c,d)$ and radius $r$.
Consider the following function given by $f(t) = (t - a)^2 + (t^2 - b)^2 + (t^3 - c)^2 + (t^4 - d)^2 - r^2$.
Observe that the roots of this polynomial correspond to values of the parameter $t$ where the moment curve
intersects the $3$-sphere.  Moreover, since the points $p_i$ lie on the moment curve and on the $3$-sphere by construction, we have
$f(t_i) = 0$ for all $1 \leq i \leq 5$ (\ie each $t_i$ is a root of $f(t)$).

We consider applying Descartes' rule of signs, which we will use to upper bound the number of strictly positive roots of $f(t)$.
The rule says that the number of strictly positive roots of a polynomial is upper bounded by the number of sign changes between non-zero
coefficients (assuming the coefficients are arranged in decreasing order of the degree of their corresponding term).  To this end,
we expand the polynomial $f(t)$:
\begin{align*}
  f(t) &= t^2 - 2at + a^2 + t^4 - 2bt^2 + b^2 + 
  t^6 - 2ct^3 + c^2 + t^8 - 2dt^4 + d^2 - r^2\\
     &= t^8 + t^6 + (1-2d)t^4 - 2ct^3 + (1-2b)t^2 - 2at + (a^2 + b^2 + c^2 + d^2 - r^2).
\end{align*}
Hence, the coefficient sequence is given by $(1,1,(1-2d),-2c,(1-2b),-2a,(a^2+b^2+c^2+d^2-r^2))$.  Clearly, there are (at most)
$5$ changes in sign in this sequence, which implies the number of strictly positive roots is upper bounded by $5$.
However, we already know of $5$ roots to this polynomial, and hence the only places where the moment curve intersects
the $3$-sphere for positive values of $t$ are for $t=t_i$.

In particular, since Descartes' rule of signs counts roots of multiplicity separately, the moment curve is not
tangent to the sphere for any $t > 0$.  Now, consider the moment curve in the open interval $(t_5,\infty)$.  It must
be the case that the entire curve in this interval lies outside the $3$-sphere.  If not, it would have to exit the
sphere again at some point, which would result in an additional root (a contradiction).  In the following, we imagine
going along the curve backwards (\ie for decreasing values of the parameter $t$).  For the open interval $(t_4,t_5)$,
since the moment curve is not tangent to the sphere at $t = t_5$, it must go inside the sphere.  The next time the curve
intersects the $3$-sphere is at $t = t_4$, and hence the curve lies inside the $3$-sphere in the open interval $(t_4,t_5)$.
Similarly, since the curve is not tangent at $t=t_4$, it must exit the $3$-sphere at $t=t_4$ and then
intersect the $3$-sphere next at $t=t_3$, implying that the curve lies outside of the $3$-sphere in the open interval $(t_3,t_4)$.
Using the same reasoning, we conclude that the $3$-sphere lies completely inside the $3$-sphere in the open interval $(t_2,t_3)$,
and then completely outside of the $3$-sphere in the open interval $(t_1,t_2)$, giving the lemma.

\end{proof}

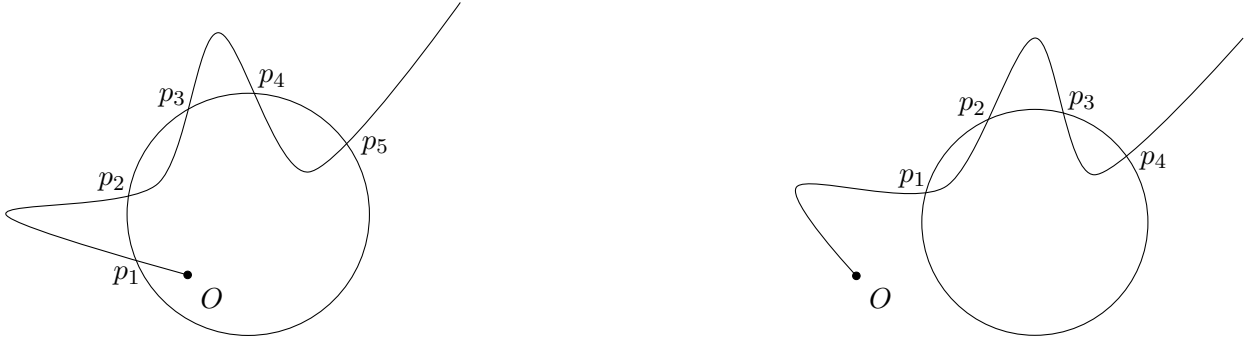
\begin{figure*}[ht!]
\begin{tikzpicture*}{0.37\textwidth}
\begin{scope}[
  vertex/.style={
  draw,
  circle, fill,
  minimum size=1mm,
  inner sep=0pt,
  outer sep=0pt%
},
]

\node[vertex,label={below right:$O$}] (o) at (0,0) {};
\node[label={$p_1$}] (o) at (-1,-0.5) {};
\node[label={$p_2$}] (o) at (-1.25,1.05) {};
\node[label={$p_3$}] (o) at (-0.25,2.45) {};
\node[label={$p_4$}] (o) at (1.4,2.75) {};
\node[label={$p_5$}] (o) at (3.1,1.65) {};
\draw plot [smooth] coordinates {(0,0) (-3,1) (-0.5,1.5) (0.5,4) (2,1.7) (4.5,4.5)}; 
\draw (1,1) circle (2cm);

\end{scope}
\end{tikzpicture*}
\hspace{0.25\textwidth}
\begin{tikzpicture*}{0.37\textwidth}
    \begin{scope}[
        vertex/.style={
            draw,
            circle, fill,
            minimum size=1mm,
            inner sep=0pt,
            outer sep=0pt%
        },
        ]
        
        \node[vertex,label={below right:$O$}] (o) at (0,0) {};
        \node[label={$p_1$}] (o) at (0.95,1.15) {};
		\node[label={$p_2$}] (o) at (1.95,2.35) {};
		\node[label={$p_3$}] (o) at (3.8,2.45) {};
		\node[label={$p_4$}] (o) at (5,1.45) {};
        \draw plot [smooth] coordinates {(0,0) (-1,1.5) (1.5,1.5) (3,4) (4,1.7) (6.5,4)}; 
        \draw (3,0.9) circle (1.9cm);
    \end{scope}
\end{tikzpicture*}
%\label{fig:curve}
    \caption{In $\mathbb{R}^4$ (left), the unique $3$-sphere through the points $p_1,\ldots,p_5$ on the moment curve has no other intersections with the moment curve
after the origin. 
        In $\mathbb{R}^3$ (right), the unique sphere through the points $p_1,\ldots, p_4$ on the moment curve has no other intersections with the moment curve.
\label{fig:sphereandcurve}}
\end{figure*}

We now prove (in a very similar manner) an analogous result for spheres in $\R^3$.  In the following, we denote by $O$ the origin.
\begin{lemma}\label{lem:moment:3d}
    Fix any $4$ positive values $0 < t_1 < t_2 < t_3 < t_4$, and consider the corresponding $4$ points
    that lie on the moment curve given by $(t_i,t_i^2,t_i^3)$ for $1 \leq i \leq 4$.  
    Then the unique sphere that goes through these $4$ points satisfies the following property: the segments on the moment curve
    corresponding to $t \in (O,t_1) \cup (t_2,t_3) \cup (t_4, \infty)$ all lie outside of the sphere (\ie
    the distance of all such points from the center of the sphere is strictly more than its radius).
\end{lemma}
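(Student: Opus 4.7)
My plan is to follow exactly the blueprint of Lemma~\ref{L:descartes}, specialized from the $3$-sphere in $\mathbb{R}^4$ to a sphere in $\mathbb{R}^3$. Let $(a,b,c)$ and $r$ denote the center and radius of the unique sphere through the four given points $p_i = (t_i, t_i^2, t_i^3)$, and define
$$f(t) = (t-a)^2 + (t^2-b)^2 + (t^3-c)^2 - r^2,$$
so that the positive roots of $f$ are exactly the values of $t > 0$ where the moment curve meets the sphere. By construction, $t_1, \ldots, t_4$ are four such roots.

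Next, I would expand $f$ to obtain a polynomial of degree $6$ in $t$, read off its coefficient sequence in decreasing order of degree, and apply Descartes' rule of signs to upper bound the number of positive roots counted with multiplicity. The two leading nonzero coefficients, coming from $t^6$ and $t^4$, both equal $+1$, which forces no sign change at the top; among the remaining consecutive pairs of nonzero coefficients there can be at most $4$ sign changes. Hence $f$ has at most $4$ positive roots, and since $t_1, \ldots, t_4$ already account for four, these are all the positive roots and each is simple. In particular, the moment curve is nowhere tangent to the sphere for $t > 0$.

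Finally, I would use the sign of $f$ to read off the in/out pattern. Since $f$ has positive leading coefficient, $f(t) > 0$ for $t > t_4$, so the moment curve lies outside the sphere on $(t_4, \infty)$. Each $t_i$ being a simple root forces $f$ to flip sign across it, so sweeping backward the intervals $(t_3, t_4), (t_2, t_3), (t_1, t_2), (O, t_1)$ alternate between inside and outside, yielding ``outside'' precisely on $(O, t_1) \cup (t_2, t_3) \cup (t_4, \infty)$, as claimed. The main obstacle I anticipate is the Descartes bookkeeping: in $\mathbb{R}^3$ the coefficient sequence has one fewer ``free'' sign than in the $\mathbb{R}^4$ case, and one must notice that the two leading $+1$ coefficients absorb a potential sign change so that the bound drops from $5$ to the needed $4$; once this is verified, the geometric conclusion follows for free by alternation, exactly as in the proof of Lemma~\ref{L:descartes}.
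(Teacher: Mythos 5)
Your proposal is correct and follows essentially the same route as the paper's proof: set up $f(t)$, expand to a degree-$6$ polynomial, observe that the $t^6$ and $t^4$ coefficients are both $+1$ so Descartes gives at most $4$ positive roots counted with multiplicity, conclude that $t_1,\ldots,t_4$ are all simple roots, and read off the inside/outside alternation. The only stylistic difference is that you make the sign-flip argument across simple roots explicit, whereas the paper phrases it in terms of the curve entering and exiting the sphere, but the substance is identical.
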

\begin{proof}
Similarly to the proof of Lemma~\ref{L:descartes}, let $\mathbb{S}_r(a,b,c)$ be the unique sphere with center $(a,b,c)$ and radius $r$ through
the points. We then analyze the function
\begin{align*}
 f(t) &= (t-a)^2 + (t^2-b)^2 + (t^3 - c)^2 - r^2\\
 &= t^2 - 2at + a^2 + t^4 - 2bt^2 + b^2 + t^6 - 2ct^3 + c^2 - r^2\\
 &= t^6 + t^4 - 2ct^3 + (1-2b)t^2 -2at +(a^2+b^2+c^2-r^2).
\end{align*}
The coefficients are $(1,1,-2c,(1-2b),-2a,(a^2+b^2+c^2-r^2))$, which has (at most) $4$ changes of sign, which by Descartes' rule means that there are at most $4$ roots. But then, since $p_1,\ldots,p_4$ already constitute $4$ roots, there are no other roots. 
Then, the segment $(t_4, \infty)$ of the moment curve must lie entirely outside the sphere. 
Furthermore, since the roots are counted with multiplicity, the section $(t_3,t_4)$ lies inside the sphere, the section $(t_2, t_3)$ lies outside the sphere, the section $(t_1, t_2)$ lies inside the sphere, and, finally, the section $(O,t_1)$ lies outside the sphere.
\end{proof}

\section{Warm-up: Hardness of $k$-Median for General Metric Spaces}\label{sec:kmed:genmetric}
In this section, we show that assuming ETH, there is no $f(k)n^{o(k)}$-time exact
algorithm for $k$-median in general metric spaces (for any computable function $f$).
In Section~\ref{sec:dimhardness}, we show how to make this reduction work in $\R^4$.

\begin{theorem}
  \label{thm:kmed:genmetric}
  There is no $f(k)n^{o(k)}$-time algorithm that solves the $k$-median problem in 
  general metric spaces unless ETH fails (for any computable function $f$), where $n$
  is the size of the input.
\end{theorem}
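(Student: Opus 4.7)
The plan is to exhibit a polynomial-time reduction from Partial Vertex Cover (PVC) to $k$-median in general metrics that preserves the parameter $k$ exactly, and then invoke the PVC hardness theorem quoted above. Given a PVC instance $(G=(V,E),s)$ with parameter $k$, I would construct a metric space whose points are $V \cup E$, declaring each $v \in V$ a candidate center and each $e \in E$ a client. Distances are set as follows: $d(e,v)=1$ if $v$ is an endpoint of $e$ and $d(e,v)=3$ otherwise; $d(v_1,v_2)=2$ for all pairs of distinct candidates; and $d(e_1,e_2)=2$ for all pairs of distinct clients. The threshold would be $\nu := |E| + 2(|E|-s) = 3|E| - 2s$.

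Next, I would verify that these pairwise distances form a valid metric. Symmetry and positivity are immediate, so only the triangle inequalities require checking; this reduces to a handful of case distinctions based on whether the intermediate point is an endpoint of the relevant edge(s), and in each case the inequality holds with slack (the tightest case being $d(v_1,v_2) \le d(v_1,e) + d(e,v_2)$ when both $v_1,v_2$ are endpoints of $e$, which is $2 \le 1+1$). This step is routine, so I would handle it in one short paragraph.

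For correctness, fix any set $S$ of $k$ candidate centers and let $c$ denote the number of edges $e \in E$ for which at least one endpoint lies in $S$. Since the only distances in the instance are $1, 2, 3$, each covered edge contributes exactly $1$ to the $k$-median cost and each uncovered edge contributes exactly $3$ (its closest center is at distance $3$). Therefore
\begin{equation*}
\sum_{e \in E} d(e,S) \;=\; c + 3(|E|-c) \;=\; 3|E| - 2c,
\end{equation*}
which is at most $\nu$ if and only if $c \ge s$. Hence the PVC instance is a yes-instance precisely when the constructed $k$-median instance has cost at most $\nu$.

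Finally, the reduction is polynomial in the size of $G$ and does not alter the parameter $k$, so a hypothetical $f(k)\,n^{o(k)}$-time algorithm for $k$-median in general metrics would yield an $f(k)\,N^{o(k)}$-time algorithm for PVC on $N$-size instances, contradicting the PVC hardness result and hence ETH. I do not anticipate a genuine obstacle here — the entire argument is elementary — but if anything warrants care, it is confirming the triangle inequalities in all cases and making sure the metric is encoded in a size that is polynomial in $|V|+|E|$ so the parameter-preserving reduction goes through cleanly.
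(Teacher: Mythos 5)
Your proposal is correct and takes essentially the same approach as the paper: a parameter-preserving reduction from Partial Vertex Cover with vertices as candidate centers, edges as clients, and the $1$/$3$ distance dichotomy, with the same threshold $\nu = 3m - 2s$. The only cosmetic difference is that you fix the candidate-candidate and client-client distances to $2$ explicitly and verify the triangle inequality directly, whereas the paper defines those remaining distances via the induced shortest-path metric; both give the same candidate-to-client distances, which is all that matters for the cost.
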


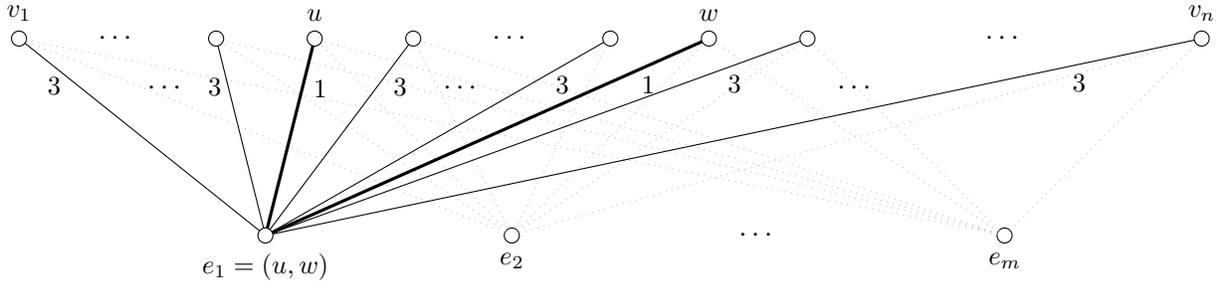
\begin{figure*}[ht!]
    \begin{tikzpicture*}{\textwidth}
        \begin{scope}[
            vertex/.style={
                draw,
                circle,
                minimum size=2mm,
                inner sep=0pt,
                outer sep=0pt%
            },
            vedge/.style={
                very near end,
                below,
                rotate=0,
                font=\small,
            },
            dedge/.style={
                near start,
                below,
            },
            every label/.append style={
                rectangle,
                %label distance=.1cm,
                font=\small,
            }
            ]
            \node[vertex,label={below:$e_1 = (u,w)$}] (c1) at (2.5,0) {};
            \node[vertex,label={below:$e_2$}] (c2) at (5,0) {};
            \node at (7.5,0) {$\cdots$};
            \node[vertex,label={below:$e_m$}] (c3) at (10,0) {};

            \node[vertex,label={above:$v_1$}] (s0) at (0,2) {};
            \node at (1,2) {$\cdots$};
            \node[vertex,label={above:}] (sj) at (2,2) {};
            \node[vertex,label={above:$u$}] (u) at (3,2) {};
            \node[vertex,label={above:}] (s1) at (4,2) {};
            \node at (5,2) {$\cdots$};
            \node[vertex,label={above:}] (s2) at (6,2) {};
            \node[vertex,label={above:$w$}] (w) at (7,2) {};
            \node[vertex,label={above:}] (s3) at (8,2) {};
            \node at (10,2) {$\cdots$};
            \node[vertex,label={above:$v_n$}] (s4) at (12,2) {};

            \draw[color=lightgray,dotted] (c2) -- (s0) node[dedge] {};
            \draw[color=lightgray,dotted] (c2) -- (s1) node[dedge] {};
            \draw[color=lightgray,dotted] (c2) -- (sj) node[dedge] {};
            \draw[color=lightgray,dotted] (c2) -- (u) node[dedge] {};
            \draw[color=lightgray,dotted] (c2) -- (w) node[dedge] {};
            \draw[color=lightgray,dotted] (c2) -- (s2) node[dedge] {};
            \draw[color=lightgray,dotted] (c2) -- (s3) node[dedge] {};
            \draw[color=lightgray,dotted] (c2) -- (s4) node[dedge] {};
            
            \draw[color=lightgray,dotted] (c3) -- (s0) node[dedge] {};
            \draw[color=lightgray,dotted] (c3) -- (s1) node[dedge] {};         
            \draw[color=lightgray,dotted] (c3) -- (sj) node[dedge] {};
            \draw[color=lightgray,dotted] (c3) -- (u) node[dedge] {};
            \draw[color=lightgray,dotted] (c3) -- (w) node[dedge] {};
            \draw[color=lightgray,dotted] (c3) -- (s3) node[dedge] {};
            \draw[color=lightgray,dotted] (c3) -- (s4) node[dedge] {};
            \draw (c1) -- (s0) node[vedge] {$3$};
            \draw (c1) -- (sj) node[vedge] {$3$~~~};
            \draw[very thick] (c1) -- (u) node[vedge] {~~~$1$};
            \draw[very thick] (c1) -- (w) node[vedge] {$1$};
            \draw (c1) -- (s1) node[vedge] {~~$3$};
            \draw (c1) -- (s2) node[vedge] {$3$};
            \draw (c1) -- (s3) node[vedge] {$3$};
            \draw (c1) -- (s4) node[vedge] {$3$};
            
            \node at (1.5,1.5) {$\cdots$};
            \node at (4.5,1.5) {$\cdots$};
            \node at (8.5,1.5) {$\cdots$};
            
        \end{scope}
    \end{tikzpicture*}
    \vspace{-0.25cm}
    \caption{The distance from $(u,w)$ to $u$ and to $w$ is $1$, and to all other vertices it is $3$.}
    \label{fig:metric}
\end{figure*}

We now describe the reduction (see Figure~\ref{fig:metric}). Let $G=(V,E)$, $s$, and $k$ be an instance of
PVC.  We denote by $m$ the number of edges, namely $m = |E|$.  We build the following metric space: for each vertex $v \in V$ we
create a point $x_v$. For each edge $(u,v) \in E$ we create a point
$y_{(u,v)}$.
The distances are the following: for each $x_z,y_{(u,v)}$, we have $d(x_z, y_{(u,v)}) = 1$ if $z \in \{u,v\}$ or $3$ if $z \notin \{u,v\}$. Finally, the remaining distances are given by the shortest path metric
induced by the distances already defined.

We now define an instance of the $k'$-median problem.
We let $k' = k$, $C = \{ x_u \mid u \in V\}$,
$A = \{ y_{(u,v)} \mid (u,v) \in E\}$, and $\nu = s + 3(m-s)$.

We show the following claim, which implies Theorem~\ref{thm:kmed:genmetric}.
\begin{claim}
  $G$ has a PVC with $k$ vertices covering at least $s$ edges if and only
  if there exists a solution to the $k'$-median instance of cost
  at most $s+3(m-s)$.
\end{claim}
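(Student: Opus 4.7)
My plan is to prove both directions by making the cost decomposition transparent. Since $S \subseteq C$, any feasible $k'$-median solution corresponds to a vertex set $V' = \{v \in V : x_v \in S\}$ of size $k$. By construction, the distance from a client $y_{(u,v)}$ to its nearest center in $S$ is $1$ if at least one of $u,v$ lies in $V'$ (i.e.\ the edge $(u,v)$ is covered by $V'$), and $3$ otherwise. Summing over all $m$ clients, the total $k$-median cost equals $c(V') + 3(m - c(V')) = 3m - 2c(V')$, where $c(V')$ denotes the number of edges covered by $V'$.

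Given the cost identity, the forward direction is immediate: a PVC solution $V'$ with $|V'|=k$ and $c(V') \geq s$ yields $S = \{x_v : v \in V'\}$ of size $k$ with cost $3m - 2c(V') \leq 3m - 2s = s + 3(m-s) = \nu$. For the reverse direction, a $k$-median solution $S$ of cost at most $\nu$ yields $V' = \{v : x_v \in S\}$ of size $k$ with $3m - 2c(V') \leq 3m - 2s$, i.e.\ $c(V') \geq s$, so $V'$ is a valid PVC solution.

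The only subtle step is checking that the client-to-candidate distances remain $1$ and $3$ under the shortest-path completion of the metric (so that the cost formula above is actually attained). For distance $1$, nothing can be shorter. For distance $3$, any path between some $y$-node and some $x$-node in the underlying bipartite graph has odd length (alternating $y$ and $x$ nodes), and each edge has weight at least $1$; moreover, a one-edge path $y_{(u,v)} \to x_z$ with $z \notin \{u,v\}$ has weight $3$ by direct definition, while a three-edge detour $y_{(u,v)} \to x_u \to y_e \to x_z$ contributes at least $1 + 1 + 1 = 3$. Hence the shortest-path distance agrees with the direct value of $3$, and the cost decomposition holds exactly as claimed. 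This yields the claim and hence Theorem~\ref{thm:kmed:genmetric}, since a faster $k$-median algorithm would translate directly into a faster algorithm for PVC.
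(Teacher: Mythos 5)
Your proof is correct and follows essentially the same reasoning as the paper's: both directions rest on the observation that each client contributes distance $1$ exactly when its edge is covered and $3$ otherwise, so the cost is a monotone function of the number of covered edges. You go slightly further than the paper in two small ways — you package the cost as the explicit identity $3m - 2c(V')$, and you verify that the shortest-path completion of the metric does not shorten any $x$--$y$ distance below its nominal value of $1$ or $3$ (a detail the paper leaves implicit) — but these are refinements of the same argument rather than a different route.
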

\begin{proof}
  Consider first an instance of PVC with $k$ vertices $\{v_1,\ldots,v_k\}$ covering at least $s$ edges.
  We claim that the solution to the $k'$-median instance in which we open the $k'$ candidates given by
  $S_0 = \{x_{v_1},\ldots,x_{v_k}\}$ has cost at most $\nu$. Observe that for the edges $(u,v)$
  covered in the PVC solution, the points $y_{(u,v)}$ are at distance exactly one from a center
  of $S_0$. Moreover, the points $y_{(u,v)}$ that correspond to an edge $(u,v)$ that are not covered
  by the PVC solution are at distance exactly 3 from a center of~$S_0$. Since there are at most
  $m-s$ such points, we have a $k'$-median solution of cost at most $s + 3(m-s)$.

  Now consider a $k'$-median solution given by $S_0 = \{x_{v_1},\ldots,x_{v_k}\}$ of cost at most $s+3(m-s)$.
  By definition, each point is at distance either 1 or 3 from a center and the total number of points is~$m$.
  It follows that the total number of points at distance 1 is at least $s$. Each such point represents an edge
  that has an endpoint in the set $\{v_1,\ldots,v_k\}$.
  Thus, the vertices $v_1,\ldots,v_k$ induce a partial vertex cover of size $k$ covering at least $s$ edges.
\end{proof}

\section{Hardness of $k$-Median with Penalties in Three Dimensions}
\label{sec:pc:3d}

In the following two sections, we establish $f(k)n^{o(k)}$-time lower bounds
for the $k$-median problem in Euclidean spaces of low dimension (for any computable function~$f$).
It seems easier to establish hardness for the $k$-median problem with
penalties, and thus our first result is Theorem~\ref{thm:pc:3d}, which
works in any dimension of at least three.  We first give details on the
reduction before proving structural properties.  We follow the
same structure as in the proof of Theorem~\ref{thm:kmed:genmetric}
in Section~\ref{sec:kmed:genmetric}. Namely, we reduce from
Partial Vertex Cover and create a candidate center for each vertex of
the input graph, along with a client for each edge of the input graph
$G=(V,E)$. For each edge, we ensure that the client corresponding to
that edge is closer to the two centers representing the endpoints of
the edge than to any other candidate center. This is the key property
in our reduction and we show how it can be satisfied in $\R^3$ for
all edges of the input graph.

\begin{theorem}
  \label{thm:pc:3d}
  There is no $f(k)n^{o(k)}$-time exact algorithm for the $3$-dimensional
  $k$-median with penalties problem, unless ETH fails (for any computable function $f$),
  where $n$ is the size of the input.
\end{theorem}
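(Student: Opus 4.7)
The plan is to adapt the metric-space reduction of Section~\ref{sec:kmed:genmetric} to the geometric setting in $\mathbb{R}^3$ by exploiting Lemma~\ref{lem:moment:3d}. Given a PVC instance $(G=(V,E),s,k)$ with $m=|E|$, I will produce, in polynomial time, an instance of $3$-dimensional $k$-median with penalties whose parameter is $k'=k$. Since the reduction preserves $k$ and is polynomial in $n$, the claimed $f(k)n^{o(k)}$ lower bound follows from the hardness of Partial Vertex Cover stated above.

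\textbf{Construction.} Assign each vertex $v\in V$ a distinct positive rational parameter $t_v$ (for instance $t_v=v$) and place the candidate center $x_v=(t_v,t_v^2,t_v^3)$ on the moment curve. Pick a rational $\delta>0$ smaller than any gap $|t_v-t_w|$, so that for each pair $u,v$ with $t_u<t_v$ the four parameters $t_u<t_u+\delta<t_v<t_v+\delta$ are distinct and none of them coincides with any other $t_w$. For each edge $e=(u,v)\in E$, let $\mathbb{S}_e$ be the unique sphere through the four moment-curve points at these four parameters, with center $c_e$ and radius $r_e$, and place the client $y_e$ at $c_e$. By Lemma~\ref{lem:moment:3d}, $x_u$ and $x_v$ lie on $\mathbb{S}_e$, while every other vertex $w\notin\{u,v\}$ corresponds to a parameter outside $(t_u,t_u+\delta)\cup(t_v,t_v+\delta)$, so $x_w$ lies strictly outside $\mathbb{S}_e$. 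In particular, $\alpha_e:=\min_{w\notin\{u,v\}}d(y_e,x_w)>r_e$. Choose a uniform rational $\varepsilon$ with $0<\varepsilon<\min_{e\in E}(\alpha_e-r_e)$, set the penalty $p_e=r_e+\varepsilon$, and set the cost threshold $\nu=\sum_{e\in E}r_e+\varepsilon(m-s)$.

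\textbf{Correctness.} For any candidate set $S\subseteq\{x_v:v\in V\}$ of size $k$, partition $E$ into \emph{covered} edges (those with $\{x_u,x_v\}\cap S\neq\emptyset$) and \emph{uncovered}. A covered edge $e$ contributes exactly $r_e$: its nearest open center is an endpoint at distance $r_e$, and every other candidate lies at distance strictly more than $r_e$. An uncovered edge contributes exactly $p_e=r_e+\varepsilon$: every non-endpoint open candidate lies at distance at least $\alpha_e>p_e$, so the penalty is paid. Thus the total cost equals $\sum_e r_e+\varepsilon\cdot|\text{uncovered}|$, which is at most $\nu$ if and only if the chosen $k$ vertices cover at least $s$ edges, giving the required yes/no equivalence with the PVC instance.

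\textbf{Main obstacle.} The main technical point will be controlling the bit complexity of the coordinates, penalties, and threshold so that the reduction is genuinely polynomial. I expect that for integer $t_v$ and a rational $\delta$ of polynomial bit complexity, solving the $4\times 4$ linear system that defines each sphere yields $c_e$ and $r_e^2$ rational of polynomial bit complexity, and the positive rational differences $d(y_e,x_w)^2-r_e^2$ are bounded below by an inverse polynomial, giving a polynomial-size choice of $\varepsilon$. The threshold $\nu$ itself is a sum of square roots, but the computational model assumed in the preliminaries for comparing sums of square roots, combined with the inverse-polynomial yes/no gap, lets us replace $\nu$ by a nearby rational without altering the equivalence. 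Verifying these bit-complexity bounds is the principal routine calculation that remains.
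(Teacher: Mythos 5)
Your proof is correct and shares the paper's core strategy — reduce from Partial Vertex Cover, place candidates on the moment curve, place a client at the circumcenter of the sphere through each edge's endpoints and two nearby dummy points, and set the penalty just above the circumradius — but you take a noticeably more elementary route exactly where the paper invests the most machinery. Because the circumradii $r_e$ vary from edge to edge, the paper normalizes by placing $n_{i,j} = \lceil n_q r_q / r_{i,j}\rceil$ copies of each client so that every covered edge contributes roughly the same amount $\mu = n_q r_q$ (Lemma~\ref{lem:3d-discretization}), and then works with a radius-independent threshold $\nu = (1+\delta)(\mu s + (m-s)(\mu+\eps))$ and a $(1+\delta)$-slack analysis. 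You instead observe that a single client per edge suffices once the threshold is allowed to depend on the radii: by Lemma~\ref{lem:moment:3d}, a solution $S$ pays exactly $r_e$ on edge $e$ if an endpoint is open and exactly $p_e = r_e + \varepsilon$ otherwise, so every solution costs exactly $\sum_e r_e + \varepsilon\cdot|\text{uncovered}|$, and taking $\nu = \sum_e r_e + \varepsilon(m-s)$ gives the PVC equivalence outright, with no copies, no normalization, and no approximation slack — the yes/no gap is cleanly $\varepsilon$. The residual bit-complexity concerns you flag (rationality of the sphere data, an inverse-polynomial lower bound on $\min_e(\alpha_e - r_e)$ so that $\varepsilon$ has polynomial size, and replacing the irrational $\nu$ by a nearby rational) are exactly the issues the paper also handles, via the polynomial-bound claim in Lemma~\ref{lem:3d-discretization} and the remark following the definition of $k$-median with penalties, and your outline for resolving them by the same means is sound.
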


We now provide the location of the candidate centers created.
For each vertex $v_i$ of the input graph, we create a candidate 
center $\tilde{v}_i$. We call $v_i$ the corresponding vertex of $\tilde{v}_i$.
We place the candidate centers on the moment 
curve: the candidate center $\tilde{v}_i$ is placed
at $(2i,(2i)^2,(2i)^3)$.
We also associate a dummy point $d_i$ with each candidate center $\tilde{v}_i$.
The $d_i$ are not part of the $k$-median instance 
and only used to generate the instance.
We place $d_i$ at $(2i+1, (2i+1)^2, (2i+1)^3)$. Let $C$ be the set of candidate centers  
$\{(2i,(2i)^2,(2i)^3) \mid i \in \{1,\ldots,|V|\}\}$
and let $C^+ = C \cup \bigcup_i \{d_i\}$. By construction, we have the following fact: for all $i$,
there is no candidate on the moment curve between $\tilde{v}_i$ (\ie $t = 2i$) and $d_i$ (\ie $t = 2i+1$).

We now explain how to create client points that correspond to edges.
Let $e_{i,j} = (v_i,v_j)$ be an edge of~$G$, of which there are $m = |E|$. Since points on the moment curve are in general position, there is a unique sphere $\mathbb{S}_{i,j}$
that intersects the moment curve at the points $\tilde{v}_i,d_i,\tilde{v}_j,d_j$, the center and radius of which we denote by $c_{i,j}$
and $r_{i,j}$, respectively.  By Lemma~\ref{lem:moment:3d}, we know that there is no point $p \in C^+ - \{\tilde{v}_i,d_i,\tilde{v}_j,d_j\}$
that is contained in the ball of center $c_{i,j}$ and radius $r_{i,j}$.

Let $q$ be an index pair that gives rise to the maximum radius $r_{i,j}$, namely $q=\argmax_{i,j} (r_{i,j})$ (\ie $q$ is of the form ``$i,j$").
We also let $\delta>0$ be some inverse polynomially small fraction to be defined (\ie $\delta = \frac{1}{|V|^c}$ for some constant $c > 0$).
We place $n_q = \lceil \frac{1}{\delta} \rceil$ client points
at $c_q$, and $n_{i,j} = \lceil n_q \frac{r_q}{r_{{i,j}}}\rceil$ client points at all other centers $c_{i,j} \neq c_q$.
We let $\cost_{i,j} = n_{i,j} \cdot r_{i,j}$ and $\mu = n_q \cdot \maxrad$.

\begin{lemma}\label{lem:3d-discretization}
For any pair $i,j$ such that $(v_i,v_j)$ is an edge, $cost_{i,j}$ satisfies $\mu \leq cost_{i,j} \leq (1+\delta)\mu$.
In addition, $\mu$ and $n_{i,j}$ (corresponding to each center $c_{i,j}$) are polynomially bounded in $|V|$.
\end{lemma}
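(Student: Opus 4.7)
The lemma has two parts: the sandwich inequality $\mu \leq \cost_{i,j} \leq (1+\delta)\mu$, and a polynomial bound on the various quantities. The first part is essentially a direct computation from the definitions of $n_{i,j}$, $\cost_{i,j}$, and $\mu$, exploiting properties of the ceiling function. The second part requires showing that the ratios $r_q/r_{i,j}$ are polynomially bounded, for which we will use the geometry of the moment curve.

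For the first part, the plan is to unpack the definitions. Writing $n_{i,j} = \lceil n_q r_q / r_{i,j}\rceil$, we have $n_q r_q / r_{i,j} \leq n_{i,j} \leq n_q r_q / r_{i,j} + 1$. Multiplying by $r_{i,j}$ gives $n_q r_q \leq \cost_{i,j} \leq n_q r_q + r_{i,j}$, i.e., $\mu \leq \cost_{i,j} \leq \mu + r_{i,j}$. Since $r_{i,j} \leq r_q$ by the choice of $q$, it suffices to show $r_q \leq \delta \mu = \delta n_q r_q$, which follows because $n_q = \lceil 1/\delta \rceil \geq 1/\delta$. This settles the inequality.

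For the polynomial bound, the plan is to bound $r_q$ from above and $r_{i,j}$ from below, each within a polynomial in $|V|$. The upper bound on $r_q$ is easy: all our candidate centers and dummy points lie within a bounding box of side length $O(|V|^3)$ (since coordinates are of the form $t, t^2, t^3$ with $t \leq 2|V|+1$), so every circumscribed sphere has radius $O(|V|^3)$. For the lower bound on $r_{i,j}$, I will use that the sphere defining $r_{i,j}$ passes through $\tilde{v}_i$ and $d_i$, whose first coordinates differ by exactly $1$; hence the distance between these two points is at least $1$, and since both points lie on the sphere, its diameter is at least $1$, giving $r_{i,j} \geq 1/2$. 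Combining, $r_q / r_{i,j} \leq O(|V|^3)$, so $n_{i,j} \leq n_q \cdot O(|V|^3) + 1$; since $n_q = \lceil |V|^c \rceil$ for constant $c$, this is polynomial in $|V|$. Similarly, $\mu = n_q r_q$ is polynomially bounded.

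The main (minor) obstacle is the lower bound on $r_{i,j}$; naively one might fear that the four points on the moment curve could be near-cospherical in a degenerate way, making $r_{i,j}$ tiny. The diameter argument above sidesteps this by invoking a single pair of points $\tilde{v}_i,d_i$ that are forced to lie far apart on the sphere, which gives a clean constant lower bound independent of $|V|$.
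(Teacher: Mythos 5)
The first part of your proof (the sandwich inequality $\mu \le \cost_{i,j} \le (1+\delta)\mu$) matches the paper's computation and is correct. Your lower bound $r_{i,j} \ge 1/2$ via the diameter is actually cleaner than what the paper does and is entirely sound: the sphere passes through $\tilde{v}_i$ and $d_i$, whose first coordinates differ by $1$, so the diameter is at least $1$.

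However, your upper bound on $r_q$ has a genuine gap. You assert that because the candidate centers and dummy points lie in a bounding box of side $O(|V|^3)$, ``every circumscribed sphere has radius $O(|V|^3)$.'' This inference is false in general: the circumradius of a finite point set is \emph{not} controlled by its diameter. Four nearly-coplanar points in $\mathbb{R}^3$ that all lie within a unit cube can have an arbitrarily large circumscribed sphere. To make your argument work you would additionally need a non-degeneracy estimate showing that the tetrahedron $\tilde{v}_i, d_i, \tilde{v}_j, d_j$ has volume bounded below. This does hold here --- the signed volume is $\frac{1}{6}$ times the Vandermonde determinant $\prod_{a<b}(t_b - t_a)$ of the four integer parameters, which is a nonzero integer and hence at least $1$ --- and one can then invoke a circumradius formula of the type $6VR = \sqrt{(aa'+bb'+cc')(-aa'+bb'+cc')(aa'-bb'+cc')(aa'+bb'-cc')}$ to conclude $R = O(D^4/V)$, which is polynomial. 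But without such an argument, the bounding-box claim alone does not suffice.

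For comparison, the paper avoids this issue by taking an algebraic route: it observes that the circumcenter $c_{i,j}$ solves a $3\times 3$ linear system with integer entries (perpendicular bisector equations), so by Cramer's rule its coordinates are ratios of integer determinants; the numerator determinants are polynomially bounded in $|V|$, and the denominator determinant, being a nonzero integer, is at least $1$ in magnitude. This yields polynomial upper bounds on the coordinates of $c_{i,j}$ and hence on $r_{i,j}$, and similarly that $r_{i,j}^2$ (a rational with polynomially bounded numerator and denominator) cannot be sub-inverse-polynomial. Your diameter trick is a nicer lower bound, but for the upper bound you should either adopt the paper's algebraic argument or supply the Vandermonde volume estimate.
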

\begin{proof}
Fix any such pair $i,j$.  Clearly, the claim is true for $c_{i,j} = c_q$ (since $cost_q = n_q \cdot r_q$), so consider any such center $c_{i,j} \neq c_q$.
For the first inequality (\ie the lower bound), we have the following:

\[ cost_{i,j} = r_{{i,j}}\cdot n_{{i,j}} = r_{{i,j}} \left\lceil \frac{r_q}{r_{{i,j}}} \cdot n_q\right\rceil \geq r_{{i,j}} \cdot \frac{r_q}{r_{{i,j}}} \cdot n_q = r_q \cdot n_q = \mu. \]

For the second inequality (\ie the upper bound), we get:

\begin{align*}
cost_{i,j} = r_{{i,j}}\cdot n_{{i,j}} &= r_{{i,j}} \left\lceil \frac{r_q}{r_{{i,j}}} \cdot n_q\right\rceil \leq r_{{i,j}}\left(\frac{r_q}{r_{{i,j}}} \cdot n_q + 1\right) = r_q \cdot n_q + r_{{i,j}} \leq r_q \cdot n_q + r_q\\
&\leq r_q \cdot n_q + r_q \cdot \delta \left\lceil \frac{1}{\delta}\right\rceil = r_q \cdot n_q + r_q \cdot \delta \cdot n_q = (1+\delta)r_q \cdot n_q = (1+\delta)\mu.
\end{align*}
To obtain our polynomial bound claims, we first note that $n_q$ is polynomially bounded since $\delta$ is
an inverse polynomial.  To argue that $n_{i,j}$ is polynomially bounded for all other centers $c_{i,j} \neq c_q$,
it suffices to upper bound $r_q$ and lower bound $r_{i,j}$.
We observe that for any edge $(v_i,v_j)$, $c_{i,j}$ is the circumcenter of four points, and is thus the
intersection of three hyperplanes (the perpendicular bisectors of these points). Therefore, it is the
solution of a linear system of equations of constant dimension with entries that are integers or half-integers,
because the points $\tilde{v}_k$ and $d_k$ have integer coordinates. It follows that $c_{i,j}$ has coordinates
described by a constant degree rational fraction of the coordinates of the points $\tilde{v}_k$ and $d_k$.
Therefore the maximal radius is polynomially bounded, and similarly, the radii $r_{i,j}$ cannot be exponentially small.
Finally, note that $\mu$ must also be polynomially bounded, since it is the product of two polynomials, namely $n_q$ and $\maxrad$.
\end{proof}

We let the set of client points $A$ be the set of all copies of all $c_{i,j}$.
We now define the price of a copy of $c_{i,j}$ to be
$p_{i,j} = r_{i,j} + \eps/n_{i,j}$ for some small enough constant $\eps$.
Let $P$ denote the set of prices (\ie penalties).
For a small enough $\eps$, the following fact follows from Lemma~\ref{lem:moment:3d}.
\begin{fact}
  \label{fact:price}
  For any $c_{i,j}$, and for any solution $S$ such that $\tilde{v}_i,\tilde{v}_j \not\in S$,
  we have $\dist(c_{i,j}, S) > r_{i,j} + \eps/n_{i,j}$.
\end{fact}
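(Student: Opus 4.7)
The plan is to apply Lemma~\ref{lem:moment:3d} directly to the sphere $\mathbb{S}_{i,j}$ and then convert the strict-but-qualitative inequality it gives into the quantitative gap $\eps/n_{i,j}$ by taking a finite minimum.

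First, I would identify how candidate centers in $C \setminus \{\tilde{v}_i,\tilde{v}_j\}$ sit with respect to $\mathbb{S}_{i,j}$. The four defining points of $\mathbb{S}_{i,j}$ are $\tilde{v}_i, d_i, \tilde{v}_j, d_j$, which correspond to the moment-curve parameters $t_1 = 2i, t_2 = 2i+1, t_3 = 2j, t_4 = 2j+1$ (assuming $i < j$). Every other candidate $\tilde v_\ell$ lies on the moment curve at parameter $t = 2\ell$, which is an even integer. By case analysis on $\ell$: if $\ell < i$ then $2\ell \le 2i - 2 \in (0, 2i)$; if $i < \ell < j$ then $2\ell \ge 2i+2 > 2i+1$ and $2\ell \le 2j-2 < 2j$, so $2\ell \in (2i+1,2j)$; and if $\ell > j$ then $2\ell \ge 2j+2 > 2j+1$. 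In every case, $2\ell$ lies in the union of open intervals appearing in the conclusion of Lemma~\ref{lem:moment:3d}, and hence $\dist(c_{i,j}, \tilde{v}_\ell) > r_{i,j}$ strictly.

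Next I would turn this strict inequality into the required quantitative gap. Define
\[
\Delta \;=\; \min_{i,j,\ell} \bigl(\dist(c_{i,j},\tilde v_\ell) - r_{i,j}\bigr),
\]
where the minimum ranges over all edges $(v_i,v_j) \in E$ and indices $\ell \notin \{i,j\}$. This is a minimum over a polynomially-sized set of strictly positive reals, hence $\Delta > 0$. Choosing $\eps < \Delta$ (which is valid since this condition is imposed once, globally, for the whole construction) and using $n_{i,j} \ge 1$ for all $i,j$, we obtain, for any solution $S \subseteq C$ with $\tilde v_i, \tilde v_j \notin S$,
\[
\dist(c_{i,j}, S) \;=\; \min_{\tilde v_\ell \in S} \dist(c_{i,j},\tilde v_\ell) \;\ge\; r_{i,j} + \Delta \;>\; r_{i,j} + \eps/n_{i,j},
\]
which is exactly the claim.

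The only genuine subtlety, which is more a concern for the overall reduction than for this fact in isolation, is that $\eps$ must be representable in polynomially many bits so that the construction runs in polynomial time. This is where the argument of Lemma~\ref{lem:3d-discretization} is reused: the centers $c_{i,j}$ and radii $r_{i,j}$ have rational squared coordinates of polynomial bit complexity, so each gap can be rewritten as $(\dist(c_{i,j},\tilde v_\ell)^2 - r_{i,j}^2)/(\dist(c_{i,j},\tilde v_\ell) + r_{i,j})$, where the numerator is a positive rational bounded below by an inverse polynomial and the denominator is bounded above by a polynomial. Hence $\Delta$, and therefore a valid choice of $\eps$, is inverse-polynomial in $|V|$. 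I do not expect any real obstacle beyond this bookkeeping: once Lemma~\ref{lem:moment:3d} is in hand and integrality of the parameters $2\ell$ is observed, the fact is essentially immediate.
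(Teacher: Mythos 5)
Your proof is correct and matches the paper's (unstated) reasoning: the paper merely asserts that the fact ``follows from Lemma~\ref{lem:moment:3d} for small enough $\eps$,'' which is exactly the argument you spell out — every other candidate $\tilde v_\ell$ has even parameter $2\ell$, which falls into one of the excluded open intervals $(0,2i)\cup(2i+1,2j)\cup(2j+1,\infty)$, giving a strict gap, and taking the finite minimum of those gaps yields a valid global $\eps$. The closing remark about the gaps being inverse-polynomial (via rationality of the squared radii) is also the right observation for making the reduction polynomial-time, consistent with the bookkeeping in Lemma~\ref{lem:3d-discretization}.
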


It follows that the cost of serving the copies of $c_{i,j}$ in
a solution $S$ such that $\tilde{v}_i,\tilde{v}_j \notin S$ is $\price_{i,j} = \cost_{i,j} + \eps$.
Moreover, we have that for any solution, either all the copies of
$c_{i,j}$ are served by $\tilde{v}_i$ or $\tilde{v}_j$ or they are all paying a price 
$\price_{i,j}$.
\begin{lemma}
  \label{lem:3d:struct}
  For any $k$-median solution $S$, and
  for any $c_{i,j}$, we have that the cost induced by 
  the copies of $c_{i,j}$ is:
  \begin{itemize}
  \item $\cost_{i,j}$ if $\tilde{v}_i \in S$ or $\tilde{v}_j \in S$,
  \item $\cost_{i,j} + \eps$ otherwise.
  \end{itemize}
\end{lemma}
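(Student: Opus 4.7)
The plan is to carry out a direct case analysis on whether $\tilde{v}_i$ or $\tilde{v}_j$ lies in the chosen solution $S$, leveraging the two already-established ingredients: the characterization of $c_{i,j}$ as the circumcenter of $\mathbb{S}_{i,j}$, and Fact~\ref{fact:price} (which in turn draws on Lemma~\ref{lem:moment:3d}). For every copy of $c_{i,j}$, the contribution to the objective is $\min(\dist(c_{i,j}, S), p_{i,j})$, with $p_{i,j} = r_{i,j} + \eps/n_{i,j}$, and we sum these up over the $n_{i,j}$ copies.

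In the first case, suppose $\tilde{v}_i \in S$ or $\tilde{v}_j \in S$. Since both $\tilde{v}_i$ and $\tilde{v}_j$ lie on $\mathbb{S}_{i,j}$, we immediately get $\dist(c_{i,j}, S) \leq r_{i,j} < p_{i,j}$, so each copy pays exactly the distance to its nearest center. I would then invoke Lemma~\ref{lem:moment:3d} to rule out that any other candidate is strictly closer: no candidate of $C$ other than $\tilde{v}_i,\tilde{v}_j$ can lie inside (or on) $\mathbb{S}_{i,j}$, since such a candidate sits on the moment curve outside the intervals permitted by the lemma. Hence $\dist(c_{i,j}, S) = r_{i,j}$ exactly, and the total contribution is $n_{i,j} \cdot r_{i,j} = \cost_{i,j}$.

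In the second case, suppose $\tilde{v}_i \notin S$ and $\tilde{v}_j \notin S$. Then Fact~\ref{fact:price} directly gives $\dist(c_{i,j}, S) > r_{i,j} + \eps/n_{i,j} = p_{i,j}$, so every copy prefers to pay its penalty. The total contribution is $n_{i,j} \cdot p_{i,j} = n_{i,j} \cdot r_{i,j} + \eps = \cost_{i,j} + \eps$.

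The argument is essentially a two-line calculation once the geometric setup is in place, so there is no serious obstacle; the only subtle point is the first case, where one must ensure that not only does some nearby center drive the cost down to $r_{i,j}$, but also that no other candidate drives it \emph{below} $r_{i,j}$. This is precisely what Lemma~\ref{lem:moment:3d} buys us, and it is worth stating explicitly in the proof to make the use of the moment-curve machinery transparent.
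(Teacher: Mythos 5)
Your proof is correct and takes essentially the same route as the paper: case-split on whether one of $\tilde{v}_i,\tilde{v}_j$ is open, invoke Lemma~\ref{lem:moment:3d} in the first case to pin the distance to exactly $r_{i,j}$, and invoke Fact~\ref{fact:price} in the second case to show the penalty is paid. You are slightly more explicit than the paper about why no other open center can drive the distance strictly below $r_{i,j}$, but this is a matter of exposition rather than a different argument.
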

\begin{proof}
  If $\tilde{v}_i \in S$ or $\tilde{v}_j \in S$, we have by 
  Lemma~\ref{lem:moment:3d}, that the distance from any copy
  of $c_{i,j}$ to $S$ is $r_{i,j} < p_{i,j}$. Therefore
  the cost induced by each copy of $c_{i,j}$ is $r_{i,j}$ and
  so the total cost is $\cost_{i,j}$.

  Now, if $\tilde{v}_i,\tilde{v}_j \notin S$, we have by Fact~\ref{fact:price}
  that the cost induced by each copy of $c_{i,j}$ is given by
  $\min(\dist(c_{i,j},S), p_{i,j}) = p_{i,j}$.
  Therefore, we conclude that the total cost induced by the 
  copies of $c_{i,j}$ is $n_{i,j}p_{ij} = \cost_{i,j} + \eps$.
\end{proof}

We can now complete the proof of the theorem.
\begin{proof}[Proof of Theorem~\ref{thm:pc:3d}]
First, by Lemma~\ref{lem:3d-discretization}, we have that the size of the 
instance is $|V|^{O(1)}$.

We show that the answer to the $k$-median instance $(C,A,P,\nu)$ 
described above, where 
$\nu = (1+\delta)\left( \mu \cdot s + (m-s)(\mu + \eps)\right)$, is 
\texttt{YES}
if and only if there exists a partial vertex cover with $k$ vertices covering at least $s$ edges.

First, if there exists such a partial vertex cover, we claim that we can pick the 
$k$ candidate centers corresponding to the $k$ vertices and obtain 
a solution
of cost at most $\nu$. Indeed, by Lemma~\ref{lem:3d-discretization},
each set of clients corresponding to an edge 
$(v_i,v_j)$ can be served by $\tilde{v}_i$ or $\tilde{v}_j$ and induces a cost of at most 
$\cost_{i,j} \le (1+\delta) \mu$.
Each set of clients corresponding to an edge $(v_i,v_j)$ not covered 
induces a cost of 
$\price_{i,j} \le (1+\delta) \mu +\eps$. It follows that the 
induced solution to
the $k$-median problem has cost at most $\nu$.

Now assume that there is a solution $S$ of cost at most 
$\nu$
to the $k$-median problem on the instance $(C,A,P, \nu)$.
By Lemma~\ref{lem:3d:struct}, for each $c_{i,j}$, 
we have that either all the copies of $c_{i,j}$
are served by a single center which is either $\tilde{v}_i$ or $\tilde{v}_j$ or all of 
them are paying a price $p_{i,j}$. 
It follows that for each $c_{i,j}$ such that $\tilde{v}_i,\tilde{v}_j \notin S$, 
the cost induced by the 
copies of $c_{i,j}$ is at least 
$\mu + \eps$.

We now argue that at least $s$ pairs $i,j$ (corresponding to $c_{i,j}$)
are being served either by $\tilde{v}_i$ or $\tilde{v}_j$.
We denote by $E_1$ the set of edges $e_{i,j}$ for which a candidate center is open at one of $\tilde{v}_i$ or $\tilde{v}_j$.
Then the cost of the $k$-median instance is 

\[\sum_{e_{i,j} \in E_1} cost_{i,j} + \sum_{e_{i,j} \in E\setminus E_1} (cost_{i,j} + \epsilon) \geq \mu|E_1|+ (m-|E_1|)(\mu+\varepsilon),\]
where the inequality comes from Lemma~\ref{lem:3d-discretization}. By hypothesis, the cost is bounded by $\nu$, which means:

\[
\mu |E_1| + (m-|E_1|)(\mu + \epsilon) \leq (1+\delta)(\mu m + \epsilon(m-s)) \iff
|E_1| \geq s - \frac{\delta m \mu}{\epsilon} - \delta(m-s).
\]
As long as the last expression, $s - \frac{\delta m \mu}{\epsilon} - \delta(m-s)$, is strictly more than $s-1$, then we can conclude that
$|E_1| > s - 1$. Since $|E_1|$ is an integer, this would yield our desired bound $|E_1| \geq s$.  For $\delta < \frac{\epsilon}{m(\mu + \epsilon)}$,
this holds.  Note that since $m$ and $\mu$ are polynomially bounded (by Lemma~\ref{lem:3d-discretization}), $\delta$ can be taken
to be an inverse polynomial.
Hence, taking the vertices corresponding to the centers of the $k$-median
solution yields a partial vertex cover consisting of $k$ vertices covering at least $s$ edges.
\end{proof}

\section{Hardness of $k$-Median in Four Dimensions}\label{sec:dimhardness}

In this section, we prove that for any fixed $d \geq 4$, and for any
fixed $k$, there does not exist an $f(k)n^{o(k)}$-time algorithm that solves
$k$-median in $d$-dimensional space exactly, unless ETH fails (for any computable function~$f$).  Our
proof is similar in spirit to the reduction given as a warm-up in
Section~\ref{sec:kmed:genmetric}, and even more similar to the one of
Theorem~\ref{thm:pc:3d}, yet the absence of penalties makes the reduction more delicate.  We only prove our hardness result for $d=4$
dimensions, which in turn implies our result for dimensions larger than $4$.

\begin{theorem}
  \label{thm:pc:4d}
  For any dimension $d \geq 4$, there is no $f(k)n^{o(k)}$-time exact algorithm for the $d$-dimensional
  $k$-median problem, unless ETH fails (for any computable function $f$), where $n$ is the size of the input.
\end{theorem}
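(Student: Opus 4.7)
The plan is to adapt the reduction of Theorem~\ref{thm:pc:3d} from Partial Vertex Cover, using an auxiliary candidate $z$ in place of the penalty mechanism. Given a PVC instance $(G=(V,E), s, k)$, I build a $k$-median instance with parameter $k' = k+1$, where the extra center is $z$. Place the candidates on the moment curve $m(t) = (t, t^2, t^3, t^4)$ in $\R^4$: let $z = m(1)$, and for each $v_i \in V$ let $\tilde v_i = m(2i)$ and introduce a dummy $d_i = m(2i+1)$ used only to pin down circumscribing $3$-spheres. For each edge $e_{ij} = (v_i, v_j)$ with $i < j$, applying Lemma~\ref{L:descartes} to the parameters $1 < 2i < 2i+1 < 2j < 2j+1$ yields the unique $3$-sphere through $z, \tilde v_i, d_i, \tilde v_j, d_j$ with center $c_{ij}$ and radius $r_{ij}$, and guarantees that every other candidate on the moment curve lies strictly outside this $3$-sphere.

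The edge client representing $e_{ij}$ is placed not at $c_{ij}$---which would allow $z$ to trivially serve every edge at cost $r_{ij}$---but at a small perturbation $c'_{ij} = c_{ij} + \eta_{ij} \vec u$, where $\vec u$ is any unit vector with $\vec u \cdot (z - c_{ij}) < 0$. This makes $d(c'_{ij}, z) = r_{ij} + \Theta(\eta_{ij})$ while $d(c'_{ij}, \tilde v_i), d(c'_{ij}, \tilde v_j) = r_{ij} + O(\eta_{ij}^2)$. Replicating $c'_{ij}$ with $n_{ij} = \lceil n_q r_q / r_{ij}\rceil$ copies as in Lemma~\ref{lem:3d-discretization}, and choosing $\eta_{ij} = \eps / n_{ij}$, ensures that each edge contributes at most $(1+\delta)\mu + O(\eps^2)$ to the cost when served by an endpoint, at most $(1+\delta)\mu + \eps$ when served by $z$, and at least $(1+\delta)\mu + n_{ij}\delta_{\min} / 2$ when served by any other moment-curve candidate, where $\delta_{\min} > 0$ is a uniform margin coming from Lemma~\ref{L:descartes}.

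To rule out a spurious no-$z$ solution---which could arise if $k+1$ vertex candidates happen to form a complete vertex cover---I add a single forcing client replicated $N$ times at a point $z^\star$ within distance $\eps_0$ of $z$. Since $d(z, \tilde v_i) \geq \|m(1) - m(2)\| = \Omega(1)$, not opening $z$ costs at least $N(1-\eps_0)$, whereas opening $z$ contributes only $N\eps_0$ to the total; choosing $N$ polynomial in $|V|$ (larger than any achievable non-forcing cost) and $\eps_0 = 1/N^2$ forces $z \in S$ in every solution below the threshold $\nu = (1+\delta)(\mu s + (\mu+\eps)(m-s)) + N\eps_0$. Correctness then follows the $3$D template: the yes-direction opens $z$ together with the $k$ PVC vertex candidates; in the no-direction $z$ is forced to be open, leaving $k$ other opened centers that---by the same counting inequality as in the proof of Theorem~\ref{thm:pc:3d}---must cover at least $s$ edges of $G$.

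The main obstacle is the quantitative margin $\delta_{\min}$: I need every candidate on the moment curve not among the five defining the $3$-sphere to be outside $B(c_{ij}, r_{ij})$ by at least an inverse-polynomial amount. This follows from a standard root-separation argument on the degree-$8$ polynomial $f(t)$ appearing in the proof of Lemma~\ref{L:descartes}: the circumcenter $c_{ij}$ is a constant-dimension linear function of integer-coordinate points, so the coefficients of $f$ are rationals of polynomial bit-complexity, and evaluating $f$ at any integer $t_k \notin \{1, 2i, 2i+1, 2j, 2j+1\}$ yields a nonzero rational whose magnitude is inverse-polynomially lower bounded, translating into an inverse-polynomial lower bound on $d(m(t_k), c_{ij}) - r_{ij}$. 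The blow-up from $k$ to $k+1$ is harmless since $n^{o(k+1)} = n^{o(k)}$, so any algorithm refuting the stated lower bound would refute PVC hardness under ETH.
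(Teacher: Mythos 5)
Your proposal matches the paper's reduction essentially step for step: the auxiliary center $z$ at $m(1)$, vertex candidates and dummies at $m(2i)$ and $m(2i+1)$, the $3$-sphere through $z,\tilde v_i,d_i,\tilde v_j,d_j$ via Lemma~\ref{L:descartes}, a perturbation of its center so the clients are slightly farther from $z$ than from the endpoints, the discretization with $n_{i,j}$ copies, and the forcing of $z$ into every good solution by placing many clients near it. (The paper places the forcing clients exactly at $z^*$ rather than at a nearby $z^\star$, and parameterizes the perturbation multiplicatively as $d(c'_{i,j},z^*)=(1+\varepsilon)r'_{i,j}$ rather than additively as $r_{i,j}+\Theta(\eta_{i,j})$, but these are bookkeeping choices, not different ideas. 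Your root-separation argument for the margin $\delta_{\min}$ is a reasonable alternative to the paper's observation that circumcenters are constant-degree rational functions of the integer inputs.)

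There is one genuine gap in the perturbation step. You set $c'_{i,j}=c_{i,j}+\eta_{i,j}\vec u$ for ``any unit vector with $\vec u\cdot(z-c_{i,j})<0$'' and then claim $d(c'_{i,j},\tilde v_i),d(c'_{i,j},\tilde v_j)=r_{i,j}+O(\eta_{i,j}^2)$. That second-order estimate is false for a generic such $\vec u$: expanding $\|c_{i,j}+\eta\vec u-\tilde v_i\|^2$ gives a first-order term $2\eta\,\vec u\cdot(c_{i,j}-\tilde v_i)$, so the distance to $\tilde v_i$ changes linearly in $\eta_{i,j}$ unless $\vec u\perp(c_{i,j}-\tilde v_i)$, and similarly for $\tilde v_j$. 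The half-space constraint $\vec u\cdot(z-c_{i,j})<0$ alone does not rule out $\vec u$ having a large positive component along $c_{i,j}-\tilde v_i$, in which case the cost of serving by an endpoint could grow by $\Theta(\eta_{i,j})$, exactly as fast as the cost of serving by $z$, and your cost comparison collapses. The fix is to additionally require $\vec u\perp(c_{i,j}-\tilde v_i)$ and $\vec u\perp(c_{i,j}-\tilde v_j)$ (such a $\vec u$ in the required half-space exists because $c_{i,j}-z$ does not lie in the span of $c_{i,j}-\tilde v_i$ and $c_{i,j}-\tilde v_j$), or, as the paper does in Lemma~\ref{L:perturbation}, simply to restrict the perturbation to the hyperplane equidistant from $\tilde v_i$ and $\tilde v_j$, which preserves $d(c',\tilde v_i)=d(c',\tilde v_j)$ exactly and sidesteps the second-order analysis entirely.
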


For a fixed parameter $k$, we are given a graph $G = (V,E)$ on $n = |V|$ vertices and $m = |E|$ edges,
along with an integer $s$.  Arbitrarily index the vertices $v_1,\ldots,v_n$.
We construct a $k'$-median instance with candidate set $C$, client set $A$, and cost bound $\nu$
as follows.  We let $k' = k+1$, and consider the moment curve $(t,t^2,t^3,t^4)$.  In particular, we add $n+1$ candidate points to $C$, which
all lie on the moment curve.  
There is one special candidate center, which we denote by $z^*$, placed on the curve at $t=1$ (\ie
$z^* = (1,1,1,1)$).  For each vertex $v_i$, we add a candidate center on the curve at $t = 2i$ for $1 \leq i \leq n$
(\ie $(2i, (2i)^2, (2i)^3, (2i)^4)$), denoted by $\tilde{v}_i$.

For each edge $e_{i,j} = (v_i,v_j)$ in $G$, consider the unique $3$-sphere, which we denote by $\mathbb{S}_{i,j}$, defined by the following $5$ points:
$z^*$, $\tilde{v}_i$, $\tilde{v}_j$, and the two points on the moment curve given by $t = 2i+1$ and $t = 2j+1$.
Let $c_{i,j}$ and $r_{i,j}$ denote the center and radius of the $3$-sphere $\mathbb{S}_{i,j}$,
respectively.  In the following, we slightly perturb the center $c_{i,j}$ of each such sphere such that it remains equidistant to $\tilde{v}_i$
and $\tilde{v}_j$ (though farther away from $z^*$), and denote the new (perturbed) position by $c'_{i,j}$, and the corresponding distance
to $\tilde{v}_i$ and $\tilde{v}_j$ by $r'_{i,j}$.

\begin{lemma}\label{L:perturbation}
  There exists $\varepsilon>0$ such that for all $i,j$ where $e_{i,j}=(v_i,v_j) \in E$, there is a point $c'_{i,j}$ such that:
  \begin{itemize}
\item $r'_{i,j} := d(c'_{i,j},\tilde{v}_i)=d(c'_{i,j},\tilde{v}_j)$, 
\item $d(c'_{i,j},z^*)= (1+\varepsilon)r'_{i,j}$, and
\item for all $k\neq i,j$, $d(c'_{i,j},\tilde{v}_k) \geq (1+\varepsilon)r'_{i,j}$.
\end{itemize}
\end{lemma}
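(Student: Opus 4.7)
The plan is to construct $c'_{i,j}$ by perturbing the circumcenter $c_{i,j}$ inside the perpendicular bisector hyperplane $H_{i,j}$ of $\tilde v_i$ and $\tilde v_j$, moving it in a direction that strictly increases the distance to $z^*$ faster than the distance to $\tilde v_i$. Staying inside $H_{i,j}$ automatically preserves the equidistance from $\tilde v_i$ and $\tilde v_j$, while Lemma~\ref{L:descartes} provides the strict slack at the other $\tilde v_k$ that survives a small perturbation by continuity. The uniform $\varepsilon$ will emerge as the minimum over the polynomially many edges of an edge-dependent threshold.

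\textbf{Initial slack from Lemma~\ref{L:descartes}.} Fix $e_{i,j}$ and assume $i<j$. By construction $c_{i,j}\in H_{i,j}$ and $d(c_{i,j},z^*)=r_{i,j}$. The parameters defining $\mathbb S_{i,j}$ are $1<2i<2i+1<2j<2j+1$, so for every $k\notin\{i,j\}$ the value $t=2k$ lies in one of the open intervals $(1,2i)\cup(2i+1,2j)\cup(2j+1,\infty)$ that Lemma~\ref{L:descartes} certifies to be exterior to $\mathbb S_{i,j}$. Hence $d(c_{i,j},\tilde v_k)>r_{i,j}$ strictly.

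\textbf{Choice of perturbation direction.} Let $m:=(\tilde v_i+\tilde v_j)/2$ and let $z^*_H$ be the orthogonal projection of $z^*$ onto $H_{i,j}$. Set $w:=(m-z^*_H)/\|m-z^*_H\|$ and $p(t):=c_{i,j}+tw$. Applying Pythagoras within $H_{i,j}$,
\[
d(p(t),z^*)^2=\|p(t)-z^*_H\|^2+\|z^*-z^*_H\|^2,\qquad d(p(t),\tilde v_i)^2=\|p(t)-m\|^2+\tfrac14\|\tilde v_j-\tilde v_i\|^2,
\]
so the derivative of $d(p(t),z^*)^2-d(p(t),\tilde v_i)^2$ at $t=0$ is $2\langle m-z^*_H,w\rangle=2\|m-z^*_H\|>0$. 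The only delicate point is the non-degeneracy $m\neq z^*_H$: otherwise $z^*-m$ would be parallel to the normal $\tilde v_j-\tilde v_i$ of $H_{i,j}$, placing $z^*,\tilde v_i,\tilde v_j$ on a common line. But three distinct points on the moment curve in $\mathbb R^4$ cannot be collinear (the Vandermonde structure of any two coordinates already rules this out), so $w$ is well-defined. I expect this non-collinearity check to be the main obstacle to dispatch; everything else is a first-order perturbation.

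\textbf{Extracting $\varepsilon$.} Define $g_{i,j}(t):=d(p(t),z^*)/d(p(t),\tilde v_i)$; by the previous step, $g_{i,j}(0)=1$ and $g_{i,j}'(0)>0$, so the intermediate value theorem produces, for every sufficiently small $\eta>0$, some $t_{i,j}>0$ with $g_{i,j}(t_{i,j})=1+\eta$. For each $k\notin\{i,j\}$, the ratio $d(p(t),\tilde v_k)/d(p(t),\tilde v_i)$ is continuous in $t$ and equals $d(c_{i,j},\tilde v_k)/r_{i,j}>1$ at $t=0$, so it remains $\geq 1+\eta$ on a neighborhood of $0$ provided $\eta$ is small enough. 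Thus there is an interval $(0,\eta_{i,j}]$ of admissible values such that $c'_{i,j}:=p(t_{i,j})$ meets all three bullets of the lemma. Taking $\varepsilon:=\min_{e_{i,j}\in E}\eta_{i,j}$ over the polynomially many edges yields the uniform $\varepsilon>0$ claimed.
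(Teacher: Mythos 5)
Your proof is correct and follows essentially the same strategy as the paper's: perturb $c_{i,j}$ within the bisector hyperplane $H$ of $\tilde{v}_i$ and $\tilde{v}_j$, using the strict slack at every other $\tilde{v}_k$ guaranteed by Lemma~\ref{L:descartes} (since $2k$ falls in one of its exterior intervals). Your write-up is somewhat more explicit than the paper's at the one delicate step, the non-collinearity of $z^*$, $\tilde{v}_i$, $\tilde{v}_j$ on the moment curve, which is what guarantees a perturbation direction in $H$ increasing the $z^*$-distance relative to the $\tilde{v}_i$-distance.
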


\begin{proof}
First, let us observe that by Lemma~\ref{L:descartes}, for an edge $e_{i,j}=(v_i,v_j)$, the ball centered at $c_{i,j}$ of radius $r_{i,j}$ contains no candidate center in its interior, and only $z^*$, $\tilde{v}_i$, and $\tilde{v}_j$ on its boundary.

  The strategy of the proof is to perturb $c_{i,j}$ in a very small ball to obtain $c'_{i,j}$.  Since the number of points $\tilde{v}_k$ is bounded, and for any $k\neq i,j$, $d(c_{i,j},\tilde{v}_k)>r_{i,j}$, there exists $\eta>0$ such that for all $k\neq i,j$, $d(c_{i,j},\tilde{v}_k)>(1+\eta)r_{i,j}$. Therefore, any point in a ball centered at $c_{i,j}$ of radius $r\leq r_{i,j} \eta/2$ is at distance at least $(1+\eta/2)r_{i,j}$ from any $\tilde{v}_k$ for $k \neq i,j$. Now, we consider the intersection of such a small ball with the $3$-dimensional hyperplane $H$ equidistant to $\tilde{v}_i$ and $\tilde{v}_j$. In this $3$-dimensional space, the inequality $d(x,\tilde{v}_i)<d(x,z^*)$ defines a $3$-dimensional subspace that is nonempty (because $z$ is different from $\tilde{v}_i$ and $\tilde{v}_j$) from which we take a point $c'_{i,j}$ such that $d(c'_{i,j}, z^*)\leq (1+\eta/2)r_{i,j}$. This can be done since we can take it arbitrarily close to $c_{i,j}$. Finally, this can be done consistently for all the edges $(v_i,v_j)$, so that for all of these, there is an $\varepsilon>0$ such that $d(c'_{i,j},z^*)= (1+\varepsilon)r'_{i,j}$. This proves the lemma.

\end{proof}

Let $q$ be an index pair that gives rise to the maximum $r'_{i,j}$, namely $q = \argmax_{{i,j}}r'_{{i,j}}$ (\ie $q$ is of the form ``$i,j$").
Let $\delta>0$ be some inverse polynomially small fraction to be defined, $n_q = \lceil \frac{1}{\delta} \rceil$, and $n_{i,j} = \lceil n_q \frac{r'_q}{r'_{{i,j}}}\rceil$ for
all ${i,j} \neq q$. We place $n_{i,j}$ client points at $c'_{{i,j}}$ for each edge $(v_i,v_j)$.
Finally, we place $n_{z^*}=|E|n_qr'_q$ client points at $z^*$. We write $cost_{i,j}=n_{i,j}r'_{i,j}$ and $\mu=n_qr'_q$. 

\begin{lemma}\label{L:discretization}
For any pair $i,j$ such that $(v_i,v_j)$ is an edge, $cost_{i,j}$ satisfies $\mu \leq cost_{i,j} \leq (1+\delta)\mu$.
\end{lemma}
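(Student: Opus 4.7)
The plan is to mimic exactly the proof of Lemma~\ref{lem:3d-discretization} from the 3-dimensional penalty reduction, since the relations between $n_q$, $n_{i,j}$, $r'_q$, and $r'_{i,j}$ are defined by the same ceiling formulas as before (only the primes on the radii change, reflecting the use of the perturbed centers $c'_{i,j}$ rather than $c_{i,j}$). The lemma is really a purely arithmetic statement about these rounded quantities, and makes no use of the ambient dimension. So essentially no new geometric content is needed beyond the definitions set up just above the statement.

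For the lower bound, I would expand
\[
  cost_{i,j} \;=\; r'_{i,j} \cdot n_{i,j} \;=\; r'_{i,j}\left\lceil \frac{r'_q}{r'_{i,j}} n_q \right\rceil \;\geq\; r'_{i,j} \cdot \frac{r'_q}{r'_{i,j}} n_q \;=\; r'_q n_q \;=\; \mu,
\]
using only the fact that $\lceil x \rceil \geq x$ and that, by the choice of $q$, $r'_{i,j} \leq r'_q$ (so $n_q r'_q / r'_{i,j} \geq n_q$, making the ceiling well-defined). The case $(i,j)=q$ is trivial since then $cost_q = n_q r'_q = \mu$ by definition.

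For the upper bound, I would bound the ceiling from above by $\lceil x \rceil \leq x + 1$ to get
\[
  cost_{i,j} \;\leq\; r'_{i,j}\left(\frac{r'_q}{r'_{i,j}} n_q + 1\right) \;=\; r'_q n_q + r'_{i,j} \;\leq\; \mu + r'_q,
\]
where the last inequality again uses that $r'_{i,j} \leq r'_q$. To convert the additive slack $r'_q$ into the multiplicative slack $\delta \mu$, I would use $n_q = \lceil 1/\delta \rceil \geq 1/\delta$, so $r'_q \leq \delta \cdot n_q r'_q = \delta \mu$, giving $cost_{i,j} \leq (1+\delta)\mu$.

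There is no real obstacle here; the only thing to double-check is that $r'_q$ is indeed the maximum over all edge radii (so that the ratio $r'_q/r'_{i,j} \geq 1$ and the ceiling does not collapse to zero), which is true by definition of $q$. In the enclosing Theorem~\ref{thm:pc:4d}, one will further want $\mu$ and each $n_{i,j}$ to be polynomially bounded in $|V|$, as in the three-dimensional case; this follows by the same argument as in Lemma~\ref{lem:3d-discretization}, since the perturbed centers $c'_{i,j}$ produced in Lemma~\ref{L:perturbation} can be taken arbitrarily close to the $c_{i,j}$, which themselves have polynomially bounded coordinates as circumcenters of integer-coordinate points on the moment curve. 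I would mention this polynomial boundedness either in the same lemma or in a short remark immediately after, since it is what later allows $\delta$ to be taken inverse polynomial in $|V|$ when finishing the hardness reduction.
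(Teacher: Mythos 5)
Your proof is correct and matches the paper's argument line for line: the lower bound uses $\lceil x \rceil \geq x$, the upper bound uses $\lceil x \rceil \leq x+1$ together with $r'_{i,j} \leq r'_q$ and $n_q \geq 1/\delta$, exactly as in the paper's proof (which itself mirrors Lemma~\ref{lem:3d-discretization}). Your closing remark about the polynomial boundedness of $\mu$ and $n_{i,j}$ is accurate and indeed deferred by the paper to the subsequent lemma on the running time of the reduction, so noting it separately is a reasonable organizational choice.
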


 \begin{proof}
 Fix any such pair $i,j$.  Clearly, the claim is true for $e_{i,j} = e_q$, so consider any such edge $e_{i,j} \neq e_q$.
 For the first inequality (\ie the lower bound), we have the following:
 $$ r'_{{i,j}}\cdot n_{{i,j}} = r'_{{i,j}} \left\lceil \frac{r'_q}{r'_{{i,j}}} \cdot n_q\right\rceil \geq r'_{{i,j}} \cdot \frac{r'_q}{r'_{{i,j}}} \cdot n_q = r'_q \cdot n_q. $$
 For the second inequality (\ie the upper bound), we get:

 \begin{align*}
 r'_{{i,j}}\cdot n_{{i,j}} &= r'_{{i,j}} \left\lceil \frac{r'_q}{r'_{{i,j}}} \cdot n_q\right\rceil \leq r'_{{i,j}}\left(\frac{r'_q}{r_{{i,j}}} \cdot n_q + 1\right) = r'_q \cdot n_q + r'_{{i,j}} \leq r'_q \cdot n_q + r'_q\\
 &\leq r'_q \cdot n_q + r'_q \cdot \delta \left\lceil \frac{1}{\delta}\right\rceil = r'_q \cdot n_q + r'_q \cdot \delta \cdot n_q = (1+\delta)r'_q \cdot n_q.
 \end{align*}
 \end{proof}

We have thus defined an instance $I(G,s,k)$ of $k'$-median, consisting of $n+1$ candidates $C$, and $|E|n_qr'_q+\sum_{i,j} n_{{i,j}} r'_{i,j}$ clients $A$
(where the sum is taken over pairs $i,j$ such that $(v_i,v_j) \in E$). The following lemma shows that the $k'$-median instance $I(G,s,k)$ has a small cost if and only if
the initial graph has a small partial vertex cover.

\begin{lemma}
The graph $G$ has a partial vertex cover of size $k$ covering at least $s$ edges if and only if $I(G,s,k)$ has a $k'$-median solution of cost at most $\nu=\mu(1+\delta)(s + (m-s)(1+\varepsilon))$.
\end{lemma}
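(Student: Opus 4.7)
I would prove both directions separately, closely mirroring the structure of the proof of Theorem~\ref{thm:pc:3d} and using the machinery of Lemmas~\ref{L:perturbation} and~\ref{L:discretization}. For the (easier) forward direction, given a partial vertex cover $\{v_{i_1},\ldots,v_{i_k}\}$ covering at least $s$ edges, I would take $S = \{z^*, \tilde{v}_{i_1},\ldots,\tilde{v}_{i_k}\}$, which has exactly $k' = k+1$ centers. The $n_{z^*}$ clients at $z^*$ then contribute $0$. For a covered edge $e_{i,j}$, Lemma~\ref{L:perturbation} gives that the $n_{i,j}$ clients at $c'_{i,j}$ each have distance exactly $r'_{i,j}$ to $S$, contributing $cost_{i,j} \leq (1+\delta)\mu$ by Lemma~\ref{L:discretization}. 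For an uncovered edge, the closest open center to $c'_{i,j}$ is $z^*$, contributing $n_{i,j}(1+\varepsilon)r'_{i,j} = (1+\varepsilon)cost_{i,j} \leq (1+\delta)(1+\varepsilon)\mu$. Summing over at most $s$ covered and $m-s$ uncovered edges yields a total of at most $(1+\delta)\mu(s + (m-s)(1+\varepsilon)) = \nu$.

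For the reverse direction, I would first argue that $z^* \in S$. Indeed, if not, then $S \subseteq \{\tilde{v}_1,\ldots,\tilde{v}_n\}$, and each of the $n_{z^*}$ clients at $z^*$ pays at least $\min_i d(z^*, \tilde{v}_i) \geq \sqrt{284}$, giving a total cost of at least $\sqrt{284}\, n_{z^*} = \sqrt{284}\, m\mu$. For $\delta$ and $\varepsilon$ small, $\nu < 2m\mu$, so this contradicts cost $\leq \nu$. This is precisely the reason for the careful choice $n_{z^*} = |E| n_q r'_q$.

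Once $z^* \in S$, I would prove a structural lemma analogous to Lemma~\ref{lem:3d:struct}: for each edge $e_{i,j}$, the contribution of the $n_{i,j}$ clients at $c'_{i,j}$ is exactly $cost_{i,j}$ if $\tilde{v}_i \in S$ or $\tilde{v}_j \in S$, and at least $(1+\varepsilon)\,cost_{i,j}$ otherwise. The first case follows because the distance to $\tilde{v}_i$ (or $\tilde{v}_j$) is $r'_{i,j}$ by Lemma~\ref{L:perturbation}. For the second case, Lemma~\ref{L:perturbation} gives that every remaining candidate ($z^*$ and all $\tilde{v}_k$ with $k\neq i,j$) is at distance at least $(1+\varepsilon)r'_{i,j}$.

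Finally, I would conclude via a counting argument. Let $E_1$ denote the set of edges $e_{i,j}$ with $\tilde{v}_i \in S$ or $\tilde{v}_j \in S$. By the structural lemma, the total cost is at least $\mu|E_1| + (1+\varepsilon)\mu(m-|E_1|)$, and combined with the bound $\leq \nu$, straightforward arithmetic gives
\[
|E_1| \;\geq\; s - \frac{\delta\bigl(s + (1+\varepsilon)(m-s)\bigr)}{\varepsilon}.
\]
Since $\varepsilon$ is a fixed positive constant (coming from Lemma~\ref{L:perturbation}) and $m$ is polynomially bounded, choosing $\delta$ as a small enough inverse polynomial makes the right-hand side strictly larger than $s-1$, whence $|E_1| \geq s$ since $|E_1|$ is an integer. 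The $k$ vertices corresponding to $S \setminus \{z^*\}$ then form the desired partial vertex cover. I expect the main obstacle to be the first step of the reverse direction, namely forcing $z^* \in S$: this is a new ingredient not present in Theorem~\ref{thm:pc:3d}, and it is what pins down why $n_{z^*}$ must be chosen proportional to $m\mu$, rather than to a single $n_{i,j}$.
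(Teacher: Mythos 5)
Your proposal is correct and mirrors the paper's own proof almost step for step: same choice of $S$ in the forward direction, same observation that $z^* \in S$ is forced by the $n_{z^*}$ clients placed at $z^*$ (you use the exact distance $\sqrt{284}$ where the paper uses the weaker but sufficient bound $\geq 2$), the same structural dichotomy ($cost_{i,j}$ versus $(1+\varepsilon)cost_{i,j}$) for edges depending on whether an endpoint candidate is open, and the same counting argument with the same threshold $\delta < \varepsilon/(m(1+\varepsilon))$. Your closing remark correctly identifies the new ingredient relative to Theorem~\ref{thm:pc:3d}, namely the large multiplicity $n_{z^*} = |E| n_q r'_q$ that forces $z^*$ to be opened, which is exactly the role it plays in the paper's argument.
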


\begin{proof}
For the first direction, assume that $G$ has a partial vertex cover of size $k$ covering at least $s$ edges, and denote by $S$ the partial vertex cover solution.
Then for each vertex in the solution $v_i \in S$, we open a center at $\tilde{v}_i$, as well as one at $z^*$ (hence, we open $k'=k+1$ candidate centers in total).
Let $e_{i,j}=(v_i,v_j)$ be one of the $s$ edges that is covered in $G$, which corresponds in the reduction to $n_{i,j}$ client points placed at $c'_{{i,j}}$. By construction, $e_{i,j}$ is covered either by $v_i$ or $v_j$, and thus one center is opened either at $\tilde{v}_i$ or $\tilde{v}_j$. We have $d(c'_{i,j},\tilde{v}_i)=d(c'_{i,j},\tilde{v}_j)= r'_{i,j}$, and thus the cost induced by the client points placed at $c'_{i,j}$ is at most $cost_{i,j}$, which is at most $(1+\delta)\mu$ by Lemma~\ref{L:discretization}. On the other hand, for the edges $e_{{i,j}}$ that are not covered in $G$, the associated client points can be served by the candidate center at $z^*$, inducing a cost of $(1+\varepsilon)cost_{i,j}\leq(1+\delta)(1+\varepsilon)\mu$. Finally, the client points at $z^*$ have no cost since $z^*$ is also an open center. Thus the cost of the instance is bounded by $\nu$.

  For the other direction, assume that we have a $(k+1)$-median solution for $I(G,s,k)$ of cost at most~$\nu$. We first claim that this means that a center is opened at~$z^*$. Indeed, the closest other candidate center is at $\tilde{v}_1$, which is at distance at least $2$ from $z^*$. Serving all the client points located at $z^*$ would therefore cost at least $n_{z^*}2=2m\mu$, which is strictly larger than $\nu$ for a sufficiently small $\varepsilon>0$. Thus, there is a center open at $z^*$, which serves the clients located there, so in the rest of the proof we can ignore the cost of such clients.

  By construction, for each edge $e_{i,j}=(v_i,v_j)$, the two closest candidate centers to $c'_{i,j}$ are at distance $r'_{i,j}$, while all the other candidate centers are at distance at least $(1+\varepsilon)r'_{i,j}$. Thus, the cost of covering the $n_{i,j}$ client points located at $c'_{i,j}$ is  $cost_{i,j}$ if a center is opened at $\tilde{v}_i$ or $\tilde{v}_j$, and $(1+\varepsilon)cost_{i,j}$ otherwise (since in such a case it can be served by the center at $z^*$). We denote by $E_1$ the set of edges $e_{i,j}$ for which a candidate center is open
  at one of the nearby candidate locations $\tilde{v}_i$ or $\tilde{v}_j$.  Then the cost of the $(k+1)$-median instance is

  \[\sum_{e_{i,j} \in E_1} cost_{i,j} + \sum_{e_{i,j} \in E\setminus E_1} (1+\varepsilon)cost_{i,j}\geq \mu(|E_1|+ (m-|E_1|)(1+\varepsilon)),\]
where the inequality comes from Lemma~\ref{L:discretization}. By hypothesis, the cost is bounded by $\nu$, and for $\delta < \frac{\varepsilon}{m(1+\varepsilon)}$, this shows that $|E_1|\geq s$. Thus, we can cover at least $s$ edges of $G$ by taking the vertices $v_i$ for which a candidate center is opened at the corresponding candidate center $\tilde{v}_i$, and therefore $G$ has a partial vertex cover of size $k$ covering at least $s$ edges.
\end{proof}

It remains to show that the reduction takes time polynomial in $n$ and linear in $k$.

\begin{lemma}
Starting from a graph $G$ on $n$ nodes and a parameter $k$, we can compute the corresponding instance $I(G,s,k)$ of $k'$-median in time $k+poly(n)$.
\end{lemma}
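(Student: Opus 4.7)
The plan is to check that each piece of the construction—the candidate centers, the perturbed client locations $c'_{i,j}$, the multiplicities $n_{i,j}$ and $n_{z^*}$, and the cost bound $\nu$—can be computed and written down in time polynomial in $n$. The $+k$ term in the stated bound only accounts for writing down the new parameter $k' = k+1$ of the output instance.

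First, placing $z^* = (1,1,1,1)$ and the candidate centers $\tilde{v}_i = (2i,(2i)^2,(2i)^3,(2i)^4)$ is immediate: they are integer points with bit complexity $O(\log n)$. For each of the $m \le \binom{n}{2}$ edges $e_{i,j}$, the unique $3$-sphere through the five specified points on the moment curve has its center $c_{i,j}$ determined by a constant-size linear system whose entries are integers (or half-integers) of polynomial bit complexity; by Cramer's rule, $c_{i,j}$ has rational coordinates of polynomial bit complexity, and the squared radius $r_{i,j}^2$ is a rational of polynomial bit complexity. The perturbation yielding $c'_{i,j}$ of Lemma~\ref{L:perturbation} is an explicit low-dimensional geometric computation, carried out inside the hyperplane equidistant from $\tilde{v}_i$ and $\tilde{v}_j$, that produces $c'_{i,j}$ (and hence $(r'_{i,j})^2$) in polynomial time; one can choose a single $\varepsilon > 0$ of inverse-polynomial magnitude that works uniformly over all edges.

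Next, after computing every $(r'_{i,j})^2$, we identify $r'_q$ by pairwise comparisons of these rationals. We then set $\delta$ to an inverse polynomial small enough to satisfy $\delta < \varepsilon/(m(1+\varepsilon))$, as required in the preceding lemma, and compute $n_q = \lceil 1/\delta \rceil$ and each $n_{i,j} = \lceil n_q r'_q / r'_{i,j} \rceil$. Because all radii are bounded both above and below by polynomials in $n$ (the lower bound following from the polynomial bit complexity of $(r'_{i,j})^2$, which also forbids exponentially small values), each ratio $r'_q / r'_{i,j}$ is polynomially bounded. Hence every $n_{i,j}$, as well as $n_{z^*} = m n_q r'_q$, is polynomially bounded, and writing the corresponding client points together with the threshold $\nu = \mu(1+\delta)(s + (m-s)(1+\varepsilon))$ takes polynomial time.

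The delicate point is that the radii $r'_{i,j}$ are algebraic rather than rational, so care is required to avoid sum-of-square-root comparisons. I sidestep this by only ever comparing one radius to another (which reduces to comparing their rational squares) and by computing each $n_{i,j}$ by taking a ceiling of a single quotient involving one square root, where the polynomial upper and lower bounds on the radii make polynomial-precision approximation sufficient to pin down the correct integer. Altogether, the total running time is polynomial in $n$ plus the $O(\log k)$ bits needed to output $k' = k+1$, which matches the claimed bound.
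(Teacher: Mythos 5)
Your proposal follows essentially the same route as the paper's proof: place the centers directly, compute each $c_{i,j}$ by solving a constant-size linear system with integer entries (so the center and the squared radius are rationals of polynomial bit length), obtain $c'_{i,j}$ by an explicit rational perturbation with a uniform inverse-polynomial $\varepsilon$, and then bound all multiplicities and the threshold polynomially. The one place you go beyond the paper is in flagging that the radii $r'_{i,j}$ themselves are square roots of rationals rather than rationals, and explaining why this does not cause trouble (compare squares for $\argmax$, and use polynomial-precision approximation of a single square root to evaluate each ceiling $\lceil n_q r'_q/r'_{i,j}\rceil$); the paper elides this by loosely calling everything a "rational fraction," and the irrationality of $\nu$ is deferred to a remark in the preliminaries. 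So your argument is correct, matches the paper's structure, and is in fact slightly more careful on the algebraic-number bookkeeping.
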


\begin{proof}
Note that since the parameter $k$ is never used in the reduction (other than for determining $k'$), the only cost associated with it is essentially copying it from one instance to
the other, so the overhead is at most $k$ (actually it is much less). Then, placing the candidate centers $z^*$ and $\tilde{v}_i$ on the moment curve is straightforward since their coordinates are polynomials. However, placing the clients is a more delicate matter. We claim that all the computation associated with them only involves rational fractions of constant degree, and they can be carried out in polynomial time.

We first compute the points $c_{i,j}$, which are circumcenters of five points on the moment curve. This can be done in polynomial time since it amounts to computing the intersections of four bisector hyperplanes, and hence solving a linear system of constant size. Furthermore, since the points $\tilde{v}_i$ have integer coordinates, the solution of the system is a rational fraction. In particular, the squares of the circumradii are rational as well. Finally, in the perturbation scheme of Lemma~\ref{L:perturbation}, the radius $r$ of the ball in which we perturb can be taken to be a rational fraction of the input as well since the squares of the distances between $c_{i,j}$ and the $\tilde{v}_k$ are rational fractions. Therefore, one can also choose $\varepsilon$ and $c'_{i,j}$ to be rational fractions, and thus $\delta$ can be taken to be inverse polynomially bounded in the input. This bounds the size of the set of clients by a polynomial. Since all the variables in the cost $\nu$ of the $I(G,s,k)$ instance are rational fractions of the input, it can be computed in polynomial time as well, which concludes the proof.
\end{proof}

This concludes the proof of Theorem~\ref{thm:pc:4d}.

\section{Hardness of $k$-Median with Penalties in Two Dimensions}\label{sec:2dhardness}
In this section, we show that there is no algorithm running in time less
than $f(k)n^{o(\sqrt{k})}$ for any computable function $f$ that solves the $k$-median with
penalties problem in two dimensions (under the ETH assumption).
We do so by reduction from a problem called Grid Tiling introduced in~\cite{marx2007optimality}, which we now define.
\begin{Definition}[Grid Tiling]~\\
  \textbf{Input:} Integer $n$, collection $\mathcal{S}$ of $k^2$ nonempty sets $S_{i,j} \subseteq [n] \times [n]$ (where $1 \leq i,j \leq k$).\\
  \textbf{Parameter:} Integer $k$.\\
  \textbf{Output:}  \texttt{YES} if and only if there exists 
  a set of $k^2$ pairs $s_{i,j} \in S_{i,j}$ such that
	\begin{itemize}
		\item If $s_{i,j} = (a,b)$ and $s_{i+1,j} = (a',b')$, then $a = a'$.
		\item If $s_{i,j} = (a,b)$ and $s_{i,j+1} = (a',b')$, then $b = b'$.
	\end{itemize}
\end{Definition}
It is known that this problem has no $f(k)n^{o(k)}$-time algorithm unless ETH fails~\cite{CFKLMPPS15}.  In fact, 
we reduce from a slightly different version of the problem where, instead of equality, we have inequality
constraints of the following form:
\begin{itemize}
	\item If $s_{i,j} = (a,b)$ and $s_{i+1,j} = (a',b')$, then $a \leq a'$.
	\item If $s_{i,j} = (a,b)$ and $s_{i,j+1} = (a',b')$, then $b \leq b'$.
\end{itemize}
We call this problem Grid Tiling Inequality, and it is also known that this problem has no $f(k)n^{o(k)}$-time algorithm
unless ETH fails~\cite{CFKLMPPS15}.

Our reduction is similar in spirit to one given by Marx~\cite{marx2007optimality} for Independent Set of Unit Disks.
In the following, it is helpful to imagine the reduction in
a continuous setting in which the client points are infinite and uniformly placed in some
region (ultimately, we discretize this region so that we work with finitely many client points).
Note that the cost of a point is either the distance to its closest open center or its penalty, whichever is smaller.

\begin{theorem}
\label{thm:pc:2d}
There is no $f(k)n^{o(\sqrt{k})}$-time algorithm for the $k$-median with penalties problem in $d=2$ dimensions (for any computable
function $f$), unless ETH fails, where $n$ is the size of the input.
\end{theorem}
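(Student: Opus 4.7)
The plan is to reduce from Grid Tiling Inequality, which by~\cite{CFKLMPPS15} admits no $f(k)n^{o(k)}$-time algorithm under ETH. The reduction maps an instance with parameter $k$ to a $k$-median with penalties instance with parameter $k' = k^2$; hence an $f(k')n^{o(\sqrt{k'})}$-time algorithm for the latter would yield an $f'(k)n^{o(k)}$-time algorithm for the former, contradicting ETH. Given the sets $S_{i,j} \subseteq [n] \times [n]$ for $(i,j) \in [k]^2$, for each element $(a,b) \in S_{i,j}$ I place a candidate center in $\R^2$ at $((i-1)L + a\delta,\, (j-1)L + b\delta)$, where $L = 2$ and $\delta > 0$ is an inverse polynomial in $n$ to be tuned. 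The candidates attached to cell $(i,j)$ thus lie in a micro-region of diameter $O(\delta n)$ around the anchor $((i-1)L, (j-1)L)$, while anchors of horizontally or vertically adjacent cells are at distance exactly $L = 2$.

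For the clients, I place points on a fine grid of spacing $\eta$ inside a large axis-aligned region $R$ containing all the unit disks around the candidates, each client having penalty~$1$; the resulting discrete cost approximates, up to Riemann-sum error, the integral $\int_R \min(\dist(x,S),1)\,dA$ (treating candidates as unit-radius balls because the penalty equals the effective radius). The threshold $\nu$ is the value of this discrete cost in the ``ideal'' configuration where exactly one ball is picked per cell and no two balls overlap: since $\int_{\text{unit disk}}(1-r)\,dA = \pi/3$, this equals (up to the chosen scaling) roughly $|R| - k^2\pi/3$. The key geometric property is that for two candidates in horizontally adjacent cells $(i,j),(i+1,j)$ corresponding to $(a,b)$ and $(a',b')$, the Euclidean distance is $\sqrt{(L+(a'-a)\delta)^2 + ((b'-b)\delta)^2}$, which for small enough $\delta$ is strictly less than $2$ iff $a > a'$ and at least $2$ iff $a \leq a'$; symmetrically for vertical neighbors and the constraint $b \leq b'$. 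Consequently, valid Grid Tiling solutions translate to pairwise disjoint unit balls, while any violated inequality forces an adjacent pair of centers to sit at distance at most $2-\delta$.

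Correctness then splits into two directions. Completeness: a valid Grid Tiling solution directly gives a $k^2$-median solution of cost at most $\nu$ by picking the corresponding candidate in each cell. Soundness: any solution of cost at most $\nu$ must both (i) pick exactly one candidate per cell and (ii) induce pairwise non-overlapping balls. For (i), two candidates from a common cell lie within $O(\delta n)$ of each other so their unit balls nearly coincide and jointly save only $\approx \pi/3$, while two cells each contributing one ball save $2\pi/3$, a deficit incompatible with the cost bound. For (ii), as sketched in the overview, overlap strictly reduces $\int_{D_1 \cup D_2}(1-\dist(x,S))\,dA$ relative to $\int_{D_1}(1-|x-c_1|)\,dA + \int_{D_2}(1-|x-c_2|)\,dA = 2\pi/3$; a direct calculation shows that centers at distance $2-\delta$ produce an overlap lens of area $\Theta(\delta^{3/2})$ and hence an integrated cost gap of order $\Omega(\delta^{5/2})$.

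The main obstacle is the quantitative calibration of the two scales: the cost gap from a single violated constraint is only of order $\delta^{5/2}$, whereas the discretization error of replacing the integral by a Riemann sum on an $\eta$-grid over $R$ is of order $\eta \cdot \mathrm{perimeter}(R) = O(\eta k)$. Choosing $\delta = 1/\mathrm{poly}(n)$ and $\eta$ an even smaller inverse polynomial ensures that the geometric cost gap strictly dominates the discretization error, while the number of clients, $O(k^2/\eta^2)$, remains polynomial in the input size. This completes the reduction and therefore the theorem.
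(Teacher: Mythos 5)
Your proposal is correct and follows essentially the same route as the paper: reduce from Grid Tiling Inequality, place one candidate per element of $S_{i,j}$ near the anchor of cell $(i,j)$ so that adjacent-cell candidates lie at distance $\geq 2$ exactly when the inequality constraints hold, flood a large square region with unit-penalty clients on a fine grid, and set the threshold $\nu$ to be the cost of a disjoint-unit-disk configuration. The one genuine difference is the error-control mechanism. The paper places both candidates and clients on the \emph{same} $\epsilon$-grid so that every disjoint-disk configuration has \emph{exactly} the same discrete cost (the disks are translates of each other over the client grid), making the threshold exact and the soundness argument a clean strict inequality. You instead use two scales $\delta \gg \eta$ and argue via a Riemann-sum approximation of the continuous integral $\int_R \min(\dist(x,S),1)\,dA$, which requires the discretization error to be dominated by the $\Omega(\delta^{5/2})$ overlap gap. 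Both work, but note your discretization error estimate $O(\eta \cdot \text{perimeter}(R)) = O(\eta k)$ is too optimistic: for a merely Lipschitz integrand the Riemann-sum error scales like $O(\eta\cdot\text{area}(R)) = O(\eta k^2)$, since the interior (not just the boundary) contributes; this does not break the calibration since you already take $\eta$ to be a small inverse polynomial, but it should be corrected. Also, your step (i) (ruling out two candidates from the same cell by an ad hoc $\pi/3$-vs-$2\pi/3$ savings argument) is redundant: once (ii) establishes that the chosen disks are pairwise disjoint, two candidates in the same cell are at distance $O(\delta n) < 2$ and hence would overlap, so pigeonhole over the $k^2$ cells already forces exactly one per cell — which is precisely how the paper argues it.
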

\begin{proof}
As mentioned, we reduce from the Grid Tiling Inequality problem.  For a fixed parameter $k$, we are given as input
an integer $n$ and a collection of sets $\mathcal{S}$ of $k^2$ nonempty sets $S_{i,j} \subseteq [n] \times [n]$ for all $1 \leq i,j \leq k$.
We show how to construct a $k'$-median with penalties instance that is a yes-instance if and only if the input to Grid Tiling Inequality
is a yes-instance. We set $k'=k^2$, which shows the claimed lower bound: suppose towards a contradiction there exists an
algorithm running in time $f(k')n^{o(\sqrt{k'})}$ for the $k'$-median with penalties problem (where $f$ is some computable function).  Then this means there is an algorithm
running in time $f(k^2)n^{o(\sqrt{k^2})} = f(k^2)n^{o(k)}$ that solves the Grid Tiling Inequality problem (with parameter $k$), yielding a contradiction under
the ETH assumption.

The instance for the $k'$-median with penalties problem is as follows.  As mentioned, we have $k' = k^2$, and we fix $\epsilon = 1/n^3$. The client points lie in the region consisting of a square of side length $2k + \epsilon(n-1)$, where the lower left corner of the square is on the origin (\ie $A = \{(x,y) \mid 0 \leq x,y \leq 2k + \epsilon(n-1) \}$). They are spaced evenly in a grid $G$, where two consecutive (horizontal or vertical) clients are at a distance $\varepsilon$ from each other, and thus there are $\Sigma=(2k/\varepsilon+n)^2$ clients. Each client point $a$ has a penalty of $p_a = 1$. We think of this grid as a discrete approximation of the uniform measure on the square $A$, and in line with this analogy, we work with the discrete measure $\mu$ carried by the client points, where each client is weighted $1$, so that $\iint_A d\mu=\Sigma$.

For each set $S_{i,j}$, we introduce $|S_{i,j}| \leq n^2$ candidate centers, and we let $C_{i,j}$ denote the set of such candidate
centers (note that there are $k^2$ such sets), where $C_{i,j} = \{(2i - 1,2j-1) + \epsilon(u-1,v-1) \mid (u,v) \in S_{i,j}\}$. Note that the candidate centers are also placed on vertices of $G$, and that, if $S_{i,j}$ has all possible pairs so that $S_{i,j} = [n] \times [n]$,
then $C_{i,j}$ precisely forms a subgrid of $n^2$ evenly spaced points in which consecutive points are at distance $\epsilon$ from each other and
the lower left point of the subgrid lies at $(2i-1,2j-1)$. The final set of candidates is given by $C = \cup_{1 \leq i,j \leq k} C_{i,j}$.
For now, we defer defining the cost threshold $\nu$.

Note that, when opening a candidate center, it can only serve
client points that are within a distance of $1$ to it, since all other client points $a$ would rather pay the penalty $p_a = 1$.  Moreover, for each
candidate center $c \in C$, we have the property that the entire disk $D$ of radius $1$ centered at $c$ is completely contained in the square region $A$.
Indeed, consider any candidate center $c_{i,j}$ corresponding to the pair $(u,v) \in S_{i,j}$ (for $1 \leq i,j \leq k$, $1 \leq u,v \leq n$),
so that $c_{i,j} = (2i-1,2j-1) + \epsilon(u-1,v-1)$.  The leftmost point possible is given by $u = 1, i = 1$, which yields a point of the form $(1,2j-1 + \epsilon(v-1))$.
Hence, no disk of radius $1$ centered at a candidate center goes beyond the left edge of the square~$A$.  The rightmost possible point is given by $i = k,u=n$, which yields a point
of the form $(2k-1 + \epsilon(n-1), 2j-1 + \epsilon(v-1))$.  Hence, no disk of radius $1$ centered at a candidate center goes beyond the right edge of the square
(which lies at $x = 2k + \epsilon(n-1)$).  A similar argument shows that no such disk goes beyond the upper or lower edges of $A$.

We now seek to understand the costs of solutions in which such disks intersect, and compare them to solutions in which they do not intersect. Consider a collection $\Delta$ of $k^2$ pairwise disjoint disks centered on candidate centers
(with the possible exception that pairs may intersect at exactly one point on the boundary, so that they are tangent to each other),
each of which has radius $1$ and is fully contained in $A$.  We claim that any such solution (obtained by opening a candidate at the center of each disk)
has the same cost. To see this, note that each candidate center $c$ contributes the same amount to the cost of the solution: this contribution
is given by the double integral $\iint_{D} d((x,y),c)d\mu$, where $D$ denotes the disk of radius $1$ centered at $c$. Since the candidate centers are placed on vertices of $G$, any candidate center sees exactly the same configuration of clients in $D$, and thus this integral does not depend on~$c$.

Now, all other points that do not belong to one of these $k^2$ disks pay a penalty of $1$.  Hence, such points contribute $\iint_{A\setminus \Delta} d\mu$ to the cost, since $A \setminus \Delta$ is the region of the square $A$ that they occupy. Similarly as before, since the disks are pairwise disjoint and the candidate centers are placed on vertices of $G$, this quantity does not depend on the actual placement of the centers. In total, the cost of a solution where the candidate centers induce a family of disjoint disks $\Delta$ is $k^2\iint_{D} d((x,y),c)d\mu+\iint_{A\setminus \Delta} d\mu$. This quantity does not depend on the placement of the center, and we set the cost threshold $\nu$ of our instance to be this value. Note that since $\mu$ is a discrete measure, the integrals are actually sums, and thus $\nu$ can trivially be computed in polynomial time from the input instance of Grid Tiling Inequality.  Actually, the value of $\nu$ is in general irrational, and hence we need to slightly increase it to make it rational.  We briefly discuss how to deal with this issue after the following discussion of costs of solutions where disks intersect.

On the other hand, consider a solution $S_1$ in which at least one pair of the $k^2$ disks intersect each other. We look at the Voronoi diagram induced by the centers opened in this solution, and denote by $V_i$ the Voronoi region corresponding to a center $c_i$. As before, a center only serves points at distance at most $1$ from it, so it serves points in the region $R_i:=V_i \cap D(c_i,1)$,
where $D(c_i,1)$ denotes the disk of radius $1$ centered at $c_i$. The total cost of the solution $S_1$ is thus $\sum_i \iint_{R_i}d((x,y),c_i)d\mu+ \iint_{A\setminus \cup_i R_i} d\mu$. Since for all $i$, $R_i\subseteq D(c_i,1)$, at least one of these inclusions is strict, and the distance $d((x,y),c_i)$ in the integrals is always strictly less than $1$, we have: 
\begin{align*} \sum_i \iint\limits_{R_i}d((x,y),c_i)d\mu+ \iint\limits_{A\setminus \cup_i R_i} d\mu &= \sum_i \iint\limits_{R_i}(d((x,y),c_i)-1)d\mu + \iint\limits_A d\mu \\
 & > \sum_i \iint\limits_{D(c_i,1)}(d((x,y),c_i)-1)d\mu + \iint\limits_A d\mu\\
  &= \sum_i \iint\limits_{D(c_i,1)}d((x,y),c_i)d\mu+ \iint\limits_Ad\mu-\sum_i\iint\limits_{D(c_i,1)}d\mu=\nu.
\end{align*}
This proves that a solution where disks of radius $1$ (centered at the opened candidate centers) intersect always has a cost strictly greater than the cost threshold~$\nu$.
Regarding irrationality of $\nu$, note that the minimum possible cost of a solution in which disks intersect is attained when all disks centered at candidates are disjoint,
with the exception of one pair that intersect (in the smallest area possible).  Since candidates are placed on a grid, the smallest such intersection that can be obtained is when two candidate
centers are opened that are at distance $2-\Omega(\epsilon)$ from one another.  Hence, we can choose $\nu$ to be a rational number in between the cost of such a solution and
$k^2\iint_{D} d((x,y),c)d\mu+\iint_{A\setminus \Delta} d\mu$.

To finish the theorem, we need only argue that there is a solution to the Grid Tiling Inequality input if and only if it is possible to select $k^2$
pairwise disjoint disks of radius $1$, where each disk is centered at some candidate. This is proved in the aforementioned reduction of Marx~\cite{marx2007optimality}, but we include it here for completeness. Note that we allow intersection
at exactly one point.  To this end, suppose we are given a yes-instance for the Grid Tiling Inequality problem.  For each $1 \leq i,j \leq k$, let $s_{i,j} = (u,v)$
denote the chosen pair in $S_{i,j}$, and open a candidate center $c_{i,j}$ at the point $(2i-1,2j-1) + \epsilon(u-1,v-1)$.  Now, the only possible disks that can
intersect with the disk of radius $1$ centered at $c_{i,j}$, denoted by $D_{i,j}$, are $D_{i+1,j},D_{i-1,j},D_{i,j+1}$, and $D_{i,j-1}$ (if such disks exist).
This holds since all other candidates have distance at least $\sqrt{2(2-\epsilon(n-1))^2} = \sqrt{2}(2 - \epsilon(n-1))$ to $c_{i,j}$, which is at least $2$ for sufficiently
small $\epsilon$.  We only argue that the disks $D_{i,j}$ and $D_{i+1,j}$ do not intersect, since the other cases follow by a similar argument.
In particular, let $s_{i,j} = (u,v)$ and $s_{i+1,j} = (u',v')$, and note that $u \leq u'$ (since the input is a yes-instance).  Hence,
the distance between $c_{i,j}$ and $c_{i+1,j}$ is given by:
\begin{align*}
&\sqrt{(2i+1 + \epsilon (u'-1) - (2i - 1 + \epsilon (u-1)))^2 + (2j-1 + \epsilon (v'-1) - (2j - 1 + \epsilon (v-1)))^2}\\
& \qquad \qquad \qquad \geq \sqrt{(2 + \epsilon(u' - u))^2} \geq 2,
\end{align*}
and hence the disks do not intersect (since both of them have a radius of $1$).

Now suppose we have a yes-instance for the $k'$-median with penalties problem.  We seek to show that we have a yes-instance for the Grid Tiling Inequality problem.
In particular, since the cost is at most $\nu$, we know that there is a way of selecting $k^2$ candidate centers $c_{i,j}$ (for $1 \leq i,j \leq k$)
where their corresponding disks $D_{i,j}$ of radius $1$ are pairwise disjoint.  This implies that, from each set $C_{i,j}$,
we have selected exactly one candidate center which is of the form $c_{i,j} = (2i-1,2j-1) + \epsilon(u-1,v-1)$ for some $(u,v) \in S_{i,j}$.
We claim that, for each $S_{i,j}$, taking such a pair $(u,v)$ satisfies the conditions of the Grid Tiling Inequality problem.  In particular,
consider any $(u,v) \in S_{i,j}$ and $(u',v') \in S_{i+1,j}$.  We want to show that $u \leq u'$.  Since the disks $D_{i,j}$ and $D_{i+1,j}$
do not intersect, the distance between them is at least $2$, which means:
\begin{align*}
2 &\leq \sqrt{(2i + 1 + \epsilon(u'-1) - (2i - 1 + \epsilon(u-1) ))^2 + (2j-1 + \epsilon(v'-1) - (2j - 1 + \epsilon(v-1) ))^2} \\
&=	\sqrt{(2 + \epsilon(u' - u))^2 + (\epsilon(v'-v))^2} \leq \sqrt{4 + 4\epsilon(u' - u) + 2\epsilon^2(n-1)^2}.
\end{align*}
Squaring both sides, we see that
\[ 4 \leq 4 + 4\epsilon(u' - u) + 2\epsilon^2(n-1)^2 \Longleftrightarrow -2\epsilon^2(n-1)^2 \leq 4\epsilon(u' -u) \Longleftrightarrow u - \frac{\epsilon(n-1)^2}{2} \leq u'.\]
As long as $\frac{\epsilon(n-1)^2}{2} < 1$, then we know that $u' > u - 1$.  Since $u'$ is an integer, we must have $u' \geq u$.  This holds
for a sufficiently small $\epsilon$ (\eg $\epsilon < \frac{2}{(n-1)^2}$).  The case regarding $s_{i,j} = (u,v) \in S_{i,j}$ and $s_{i,j+1}=(u',v') \in S_{i,j+1}$ implying
$v \leq v'$ is symmetric, and hence the proof is complete.

\end{proof}

\section{An Algorithm for $k$-Median in Two Dimensions}\label{sec:2d}
We define an instance of the 2-dimensional $k$-median optimization problem to be a triple $(C,A,k)$ where $C$ denotes the
set of candidates and $A$ denotes the set of clients. The output is a set $K$ of $k$ candidates, which has a cost of $\sum_{a\in A}d(a,K)$,
such that no other set of $k$ candidates obtains a lower cost. (See Definition~\ref{def:kmed} for the decision version of the problem.) We show:

\begin{theorem}
  \label{thm:upperbound:2d}
  There exists an exact algorithm that finds an optimal solution 
  to any instance $(C,A,k)$ of the 2-dimensional $k$-median optimization  problem in time $|A| \cdot |C|^{O(\sqrt{k})}$. 
\end{theorem}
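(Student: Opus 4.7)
The approach is a divide-and-conquer that guesses a planar separator of the Voronoi diagram of an optimal solution, following Marx and Pilipczuk~\cite{MarxP15}. Let $K^{*}\subseteq C$ be an arbitrary optimal solution of size $k$. In the plane, the Voronoi diagram of $K^{*}$ is a planar subdivision with $k$ cells, and its Delaunay dual is a planar graph on $k$ vertices. By the Lipton--Tarjan planar separator theorem there exists $S^{*}\subseteq K^{*}$ with $|S^{*}|=O(\sqrt{k})$ whose removal splits $K^{*}\setminus S^{*}$ into two groups $A^{*},B^{*}$ of size at most $2k/3$ that share no Delaunay edge. Geometrically, the unions of Voronoi cells of $A^{*}$ and of $B^{*}$ form two disjoint open regions of the plane, separated by the closed region formed by the Voronoi cells of $S^{*}$.

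The algorithm enumerates every subset $S\subseteq C$ of size $O(\sqrt{k})$, trying $|C|^{O(\sqrt{k})}$ possibilities, one of which equals $S^{*}$. For each guess the key step is to determine the induced partition of clients into three groups (served by $S$, lying on the $A$-side, lying on the $B$-side) in the optimal Voronoi diagram, so that the residual problem decouples into two independent $k$-median subinstances of sizes $k_A,k_B\le 2k/3$. The topological separation above implies that this partition is controlled by a combinatorial ``interface'' of size $O(\sqrt{k})$, namely the centers of $A^{*}\cup B^{*}$ adjacent to $S^{*}$ in the optimal Delaunay graph; we enumerate this interface in an additional $|C|^{O(\sqrt{k})}$ time. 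Given $S$ together with the interface, each client is read off to its side by a purely geometric test, and the two subinstances are solved recursively and combined.

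Taking the minimum over all guesses and writing $T(k)$ for the running time on fixed $C$ and $A$, we get the recurrence
\[
T(k)\;\le\; |C|^{O(\sqrt{k})}\bigl(T(k_A)+T(k_B)\bigr),\qquad k_A,k_B\le 2k/3,
\]
whose exponent $E(k)$ satisfies $E(k)\le c\sqrt{k}+E(2k/3)$ and telescopes to $O(\sqrt{k})$ since $\sum_{i\ge 0}(2/3)^{i/2}$ converges; the additive $\log k$ from the branching factor of $2$ is absorbed into the exponent. Evaluating distances from clients to candidates at the base of the recursion contributes the claimed $|A|$ factor, yielding the bound $|A|\cdot|C|^{O(\sqrt{k})}$.

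The main technical obstacle is the client-partition step: starting from only $S$, we must split the clients into the three groups without knowing $A^{*}$ or $B^{*}$. Making this rigorous requires exploiting the geometric separation by the cells of $S^{*}$ and verifying that enumerating an $O(\sqrt{k})$-sized interface indeed pins the partition down and costs only another $|C|^{O(\sqrt{k})}$ factor. The remaining ingredients---the Lipton--Tarjan planar separator applied to the Delaunay graph, and the recurrence analysis---are standard, which is consistent with the paper's choice to defer full details to Section~\ref{sec:2d}.
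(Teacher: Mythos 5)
Your high-level plan --- separate via the Voronoi/Delaunay structure of an optimal solution, enumerate the separator, recurse, and take the minimum --- matches the paper's, but there is a genuine gap in exactly the step you flag as ``the main technical obstacle.'' You apply Lipton--Tarjan to the optimal Delaunay graph to obtain a vertex separator $S^{*}$ of size $O(\sqrt{k})$, and then claim that enumerating an ``interface'' (the Delaunay neighbors of $S^{*}$) of size $O(\sqrt{k})$ suffices to pin down the client partition. That size claim is unjustified: in a planar graph on $k$ vertices, the neighborhood of a set of $O(\sqrt{k})$ vertices can have size $\Theta(k)$ (you only get an $O(k)$ bound on incident edges, not an $O(\sqrt{k})$ bound on the neighborhood). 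Even granting a small interface, knowing $S^{*}$ and its Delaunay neighbors does not determine the geometric boundary of $S^{*}$'s Voronoi region in the optimal diagram --- that boundary depends on the entire optimal solution --- so the ``purely geometric test'' you invoke for assigning clients to sides is not actually available. Without a correct encoding, neither the $|C|^{O(\sqrt{k})}$ enumeration bound nor the independence of the two subinstances follows.

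The paper avoids both difficulties by parameterizing the separator as a geometric curve rather than a vertex set. After a small perturbation (so no point is equidistant from four candidate centers) the Voronoi diagram of an optimal solution is a $3$-regular plane graph; the Marx--Pilipczuk noose theorem (Theorem~\ref{T:noose}) then gives a $2/3$-face-balanced Jordan curve of length $O(\sqrt{k})$ passing through $O(\sqrt{k})$ Voronoi vertices and $O(\sqrt{k})$ Voronoi cells. Because each cell is convex and contains its center, this noose is discretized as a polygonal ``separating curve'' whose breakpoints alternate between candidate centers in $C$ and points of $P$, the $O(|C|^{3})$-size set of points equidistant from three candidates. This curve (a) lies in a polynomial-size enumeration universe, giving $|C|^{O(\sqrt{k})}$ guesses, and (b) directly and geometrically splits the plane: Lemma~\ref{lemma:upperbound:correct} shows by a short triangle-inequality argument that a client inside the curve cannot be served by a center outside it, precisely because every segment of the correct curve lies inside a Voronoi cell of the optimal solution. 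In short, the paper guesses the boundary itself rather than a combinatorial separator plus an ad hoc interface, and this is what makes the recursion well-defined. Your recurrence analysis at the end is fine; the missing ingredient is the separator encoding.
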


Our algorithm is quite standard as it uses similar ideas 
as the ones in the work of Marx and Pilipczuk~\cite{MarxP15}. Since it
consists of guessing the set of candidate centers and their Voronoi cells, it
also works verbatim for $k$-means, as well as for the versions of $k$-means and $k$-median with penalties.

Let $(C,A,k)$ be an instance of the 2-dimensional $k$-median
problem. By a small perturbation of the positions of the candidate
centers $C$, we can assume that no point in $\mathbb{R}^2$ is
equidistant to four or more centers. Indeed, a small enough
perturbation will not change which centers are opened in an optimal
solution, and will slightly change the cost, but this cost can be
recomputed afterwards with the exact positions of the centers. The set
of points that are equidistant from $3$ candidate centers is denoted
by $P$, and there are $O(|C|^3)$ of them.

We define a \emph{separating curve} $S$ with respect to $C$ of length $r$ to be a concatenation of segments of the form  $(c_1, p_1), (p_1,c_2), \ldots , (c_r, p_r), (p_r,c_1)$, where the $c_i$ are candidate centers, and the $p_i$ are points in $P$ (see Figure~\ref{fig:voronoi}). A separating curve is \emph{valid} if it is simple, \ie there are no self-intersections.  We denote by $\ins(S)$ and $\out(S)$ its respective interior and exterior.

Our algorithm (Algorithm~\ref{algo:Clustering}) works by enumerating valid separating curves of size $O(\sqrt k)$, using them to cut the instance into two subinstances and recursing. The base cases are then solved by brute-force. The rationale behind this, which we formalize in the next subsection, is that since the Voronoi diagram of the optimal solution is a planar graph, it admits small balanced separators, which in this case can be realized by valid separating curves. Therefore, one of the separating curves we enumerate corresponds to such a small balanced separator, and as we will prove, such a separator can be easily used to partition the problem into two independent subinstances.

\renewcommand{\algorithmicforall}{\textbf{for each}}

\begin{algorithm}
  \caption{\texttt{ExactClustering}:
    Exact Algorithm for 2-Dimensional $k$-Median}
  \label{algo:Clustering}
  \begin{algorithmic}[1]
    \State \textbf{Input:} A set of candidate centers $C$,
    a set of clients $A$, a positive integer 
    $k$, and a set of centers that are already open $\hat{C}$
        \If{$k = O(1)$}
    \State Let $\calS = \text{argmin}_{S' \subseteq C, |S'| \le k} 
    \sum_{a \in A} \dist(a, S' \cup \hat{C})$
    \State Return $\calS \cup \hat{C}, \cost(\calS \cup \hat{C})$
    \EndIf
    \State $\calS \gets$ an arbitrary solution
    \State $P \gets$ set of points equidistant to three candidate centers
    \ForAll{valid separating curve $\gamma=(c_1,p_1), \ldots, (c_\ell,p_\ell),(p_\ell,c_1)$ of length 
      $\ell \le  \sqrt{4.5k}$}
      \State $\calS' \gets$ an arbitrary solution
    \State $\tilde{C} \gets \{c_1,\ldots,c_{\ell}\}$ 
    \ForAll{$k' \in \{k/3-\ell,\ldots,2k/3\}$}
    \State $\Sin, \cost(\Sin) \gets $ \texttt{ExactClustering}
    ($C\setminus \tilde{C} \cap (\ins(\gamma) \cup \gamma)$, 
    $A \cap (\ins(\gamma) \cup \gamma)$, $k'$, $\hat{C} \cup  \tilde{C}$)
    \State $\Sout,\cost(\Sout) \gets $ \texttt{ExactClustering}
    ($C \setminus \tilde{C} \cap (\out(\gamma) \cup \gamma)$, $A \cap \out(\gamma)$, $k-\ell-k'$, $\hat{C} \cup  \tilde{C}$)    
    \If{$\cost(S') > \cost(\Sin) + \cost(\Sout)$}
    \State $\calS' \gets \Sout \cup \Sin$, 
    $\cost(\calS') = \cost(\Sin) + \cost(\Sout)$
    \EndIf
    \EndFor
    \If{$\cost(\calS) > \cost(\calS')$}
    \State $\calS \gets \calS'$,
    $\cost(\calS) = \cost(\calS')$
    \EndIf
    \EndFor
    \State Return $\calS, \cost(\calS)$
  \end{algorithmic}
\end{algorithm}

\subsection{Correctness}
We first recall some notions from topological graph theory that we rely on. 

For a plane graph $G$, a \emph{noose} of $G$ is a Jordan curve that intersects $G$ only at its vertices, and visits each of its faces at most once. The length of a noose is the number of vertices that it intersects. A noose $\gamma$ is $\alpha$-\emph{face-balanced} if the number of faces that are strictly enclosed or strictly excluded by $\gamma$ are both not larger than $\alpha |F(G)|$,
where $F(G)$ denotes the set of faces of~$G$. The following theorem of Marx and Pilipczuk~\cite[Corollary~4.17]{MP15} (see also~\cite{MarxP15}) shows that there exist nooses that form balanced separators of small size for the faces of $G$.

\begin{theorem}\label{T:noose}
Let $G$ be a connected $3$-regular graph with $n$ vertices, $m \geq 6$ edges embedded on a sphere. Then there exists a $2/3$-face-balanced noose for $G$ that has length at most $\sqrt{4.5n}$.
\end{theorem}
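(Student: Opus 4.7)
The plan is to reduce the existence of a short, face-balanced noose in $G$ to the existence of a short, vertex-balanced simple cycle in the dual graph $G^*$, and then invoke a planar cycle separator theorem in the spirit of Miller's simple cycle separator. Since $G$ is $3$-regular with $n$ vertices, Euler's formula gives $m = 3n/2$ edges and $f = n/2 + 2$ faces; dually, $G^*$ is a planar (multi)graph with $N = n/2 + 2$ vertices in which every face has size $3$, i.e., a triangulation. Faces of $G$ correspond exactly to vertices of $G^*$, so the target condition ``$2/3$-face-balanced for $G$'' translates to ``$2/3$-vertex-balanced for $G^*$.''

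\textbf{Finding the cycle in the dual.}
Next I would apply Miller's simple cycle separator theorem (or a direct variant for triangulations): for a planar triangulation on $N$ vertices with non-negative vertex weights summing to $W$, there exists a simple cycle of length at most $\sqrt{8N}$ whose interior and exterior each carry weight at most $2W/3$. Weighting each vertex of $G^*$ uniformly and applying this theorem yields a simple cycle $\gamma^*$ in $G^*$ of length at most $\sqrt{8(n/2 + 2)} = \sqrt{4n + 16}$, which is bounded by $\sqrt{4.5n}$ as soon as $n \geq 32$; the finitely many smaller cases (the hypothesis $m \geq 6$ forces $n \geq 4$) can be dispatched by direct inspection. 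Each side of $\gamma^*$ contains at most $\tfrac{2}{3}(n/2 + 2) \leq \tfrac{2}{3}|F(G)|$ vertices of $G^*$, which is exactly the claimed face-balance.

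\textbf{From dual cycle to noose.}
To turn $\gamma^* = f_1, f_2, \ldots, f_\ell$ into a noose $\gamma$ of $G$, I exploit the fact that each consecutive pair $f_{i-1}, f_i$ shares a primal edge $e_{i-1}$ and hence a common vertex. For each $i$, pick a vertex $v_i$ shared by $f_{i-1}$ and $f_i$, and connect $v_{i-1}$ to $v_i$ by a simple arc drawn inside $f_{i-1}$ (except at its endpoints). The concatenation is a closed curve meeting $G$ only at the chosen vertices and visiting each face at most once, since $\gamma^*$ visits each vertex of $G^*$ at most once. By construction the length of $\gamma$, namely the number of vertices it passes through, equals $\ell$, and the interior of $\gamma$ encloses precisely those faces corresponding to interior vertices of $\gamma^*$, yielding the required balance.

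\textbf{Main obstacle.}
The subtle point is verifying that $\gamma$ can be chosen to be \emph{simple}, i.e., a genuine Jordan curve. Two distinct shared edges $e_i$ and $e_j$ may have a common endpoint, so naively picking the $v_i$'s could force $\gamma$ to revisit a vertex and thereby self-intersect. I expect the resolution to exploit $3$-regularity: each primal vertex has only three incident faces, so the number of pairs $(f_{i-1}, f_i)$ that any given vertex can ``witness'' is tightly constrained. With a careful rule for choosing $v_i$ (for instance, the endpoint of $e_{i-1}$ lying in a fixed orientation around $f_{i-1}$), combined with a local reroute through an adjacent triangular dual face whenever a conflict arises, one should be able to ensure simplicity without increasing the length of $\gamma$. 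This routing argument, together with the small-$n$ base case absorbing the $\sqrt{4n + 16}$ versus $\sqrt{4.5n}$ slack, is where I expect the bulk of the technical work to lie.
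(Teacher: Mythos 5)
The paper does not actually prove this theorem: it is quoted verbatim from Marx and Pilipczuk~\cite[Corollary~4.17]{MP15} and used as a black box. So there is no in-paper proof to compare against; the relevant comparison is with the Marx--Pilipczuk argument. Their proof does \emph{not} go through Miller's cycle separator. The constant $\sqrt{4.5n}$ is the Fomin--Thilikos bound on the branchwidth of an $n$-vertex planar graph, and the existence of the noose comes from combining that bound with the Seymour--Thomas sphere-cut decomposition theorem, which produces a branch decomposition whose cuts are realized \emph{directly} by nooses. This sidesteps entirely the dual-cycle-to-noose conversion that you identify as your main obstacle. So your proposal is a genuinely different route, not a reconstruction of theirs.

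As for your route on its own terms: it is plausible but has two gaps. The one you flag, converting the simple cycle $\gamma^*$ in $G^*$ into a simple noose by choosing distinct representatives $v_i$ for the shared edges, is real and unresolved; what you write (``one should be able to ensure simplicity'') is a hope, not an argument. It can in fact be done -- the shared edges $e_1,\ldots,e_\ell$ form a bond of $G$, and in a cubic graph a Hall-type argument on endpoints of a bond does yield a system of distinct representatives -- but you need to actually carry that out. Second, you silently assume $G^*$ is a $2$-connected simple triangulation when invoking Miller. That assumption does hold here (a connected cubic graph has no bridge by a parity count, and no cut vertex either, since each component of $G-v$ would then be joined to $v$ by at least two of $v$'s three edges; hence $G$ is $2$-connected and so is its dual), but you should say so rather than leave it implicit, especially since the theorem statement only assumes $G$ connected. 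With those two points patched, and your small-$n$ base case made explicit, the argument would go through, though it is noticeably heavier than the branchwidth route precisely because the noose has to be rebuilt by hand.
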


We apply this theorem to the Voronoi diagram $\calV$ induced by an optimal solution to the instance $(C,A,k)$. By our assumption, no point is equidistant to four centers, and therefore this graph is $3$-regular. However, $\calV$ is not a graph embedded on the sphere, and not even a plane graph, since it has infinite rays. We remedy this by adding three dummy centers to $\calV$ that are very far from the rest of the candidate centers and clients, and make it so that there are only $3$ rays going to infinity. Then we compactify the picture by embedding this new graph into a sphere with an additional point at the intersection of the three rays. The reader can verify that applying Theorem~\ref{T:noose} to this graph yields a noose of the original graph with the same guarantees (for the obvious extension of the definitions for graphs with infinite rays).

Since the edges of $\calV$ are straight lines, its faces are convex, and there is a candidate center in the interior of each face, this particular case also allows for a discrete description of nooses: each noose can be discretized by replacing a maximal subarc in a face by two straight-line segments between its endpoints and the candidate center in that face. By Theorem~\ref{T:noose} applied to $\calV$ and this discretization, we obtain that there exists a valid separating curve $\gamma$ of length at most $\sqrt{4.5k}$ that is a $2/3$-face-balanced noose for $\calV$.

The following lemma shows that such a separating curve can be used to partition the clients.  In the following, we denote by $\opt$ an optimal
solution.

\begin{lemma}
  \label{lemma:upperbound:correct}
  We have that all the clients 
  in $\ins(\gamma)$ (respectively $\out(\gamma)$) are served by a 
  center in $\opt$ that is
  in $\ins(\gamma) \cup \gamma$ (respectively $\out(\gamma) \cup \gamma$).
\end{lemma}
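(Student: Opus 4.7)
The plan is to exploit the fact that $\gamma$ was constructed as a discretized $2/3$-face-balanced noose of the Voronoi diagram $\calV$ of $\OPT$, and that the faces of $\calV$ are in bijection with the centers of $\OPT$ (each face contains exactly one center, which lies in its interior because Voronoi cells are convex and the center is the site defining the cell). Since $\gamma$ is a noose, each face of $\calV$ falls into exactly one of three categories: \emph{strictly interior} to $\gamma$, \emph{strictly exterior} to $\gamma$, or \emph{visited} by $\gamma$ (i.e., the interior of the face meets $\gamma$). The discretization step described just before the lemma replaces each maximal subarc of the noose lying in a visited face by two segments through the candidate center of that face; thus the visited faces are precisely those whose center is one of the $c_1,\dots,c_\ell$ appearing along $\gamma$.

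Given these three categories, I would set up the main argument as follows. Take any client $a \in \ins(\gamma)$ and let $c(a) \in \OPT$ be the (unique, by the general-position assumption on $C$) center serving $a$, so that $a$ lies in the Voronoi face $F_{c(a)}$. The face $F_{c(a)}$ cannot be strictly exterior to $\gamma$, because then the entire face would lie in $\out(\gamma)$, contradicting $a \in F_{c(a)} \cap \ins(\gamma)$. Hence $F_{c(a)}$ is either strictly interior to $\gamma$ or visited by $\gamma$. In the first case, $c(a)$ lies in the interior of $F_{c(a)} \subseteq \ins(\gamma)$. In the second case, by the description of the discretization, $c(a) \in \{c_1,\dots,c_\ell\} \subseteq \gamma$. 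Either way, $c(a) \in \ins(\gamma) \cup \gamma$, which is exactly what we want. The argument for $a \in \out(\gamma)$ is completely symmetric, with "interior" and "exterior" swapped throughout.

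The only subtle point, and the step I expect to require the most care in the write-up, is the identification of visited faces with the centers appearing along $\gamma$. This relies on two facts that I would make explicit: (i) each face of $\calV$ is convex and contains the corresponding site in its interior, so the discretization (replacing an in-face subarc by two line segments meeting at the site) is a valid modification that keeps $\gamma$ as a simple curve inside the same union of faces, and (ii) by the "visits each face at most once" property of nooses, no center $c(a)$ of a visited face is double-counted along $\gamma$. Once these are in place, the trichotomy strictly-interior / strictly-exterior / visited covers all faces, and the lemma follows directly from the observations above, with no additional metric or combinatorial work required.
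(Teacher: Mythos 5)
Your proof is correct, but it follows a genuinely different route from the paper's. The paper argues by contradiction with a crossing argument: if $a\in\ins(\gamma)$ were served by $c\in\out(\gamma)$, the segment $ac$ must cross $\gamma$ at some point $x$ lying on a segment of the form $(c',p)$; since that segment lies inside the Voronoi cell of $c'$, the point $x$ is at least as close to $c'$ as to $c$, and the triangle inequality then makes $c'$ at least as close to $a$ as $c$ — so $a$ is actually served by a center on $\gamma$. Your argument instead classifies the Voronoi face $F_{c(a)}$ containing $a$ into strictly interior, strictly exterior, or visited (using that a noose meets the graph only at vertices and so leaves unvisited faces on a single side), rules out the strictly-exterior case, and observes that the discretization places the center of each visited face on $\gamma$ itself. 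The face-classification approach is purely topological and avoids the metric step entirely; the paper's crossing argument is shorter but leans on the triangle inequality and on the fact that each segment of $\gamma$ stays inside a single Voronoi cell (a fact your approach also uses, but only to identify the visited faces). Both are sound; yours trades a one-line triangle-inequality step for a cleaner case analysis, at the cost of slightly heavier reliance on the combinatorial structure of the discretized noose. One small point worth making explicit in a final write-up: if $a$ lies on the boundary of two Voronoi cells, your trichotomy still works because an unvisited face's closed cell (minus Voronoi vertices that $\gamma$ passes through) lies entirely on one side of $\gamma$, so $a\in\ins(\gamma)$ still forces $F_{c(a)}$ to be interior or visited.
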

\begin{proof}
Assume towards a contradiction that there is a client $a$ that is 
  in $\ins(\gamma)$ and is served in $\opt$ by a center $c$
  in $\out(\gamma)$. Consider the line $L$ from $a$ to $c$. Since $\gamma$ is a Jordan curve, $L$ has to intersect at least one of the line segments defining $\gamma$, which links a center $c'$ to a point $p$. The intersection point lies in the Voronoi cell associated with $c'$, and thus by the triangle inequality $a$ is closer to $c'$ than $c$, a contradiction.
\end{proof}

We can thus finish the proof that Algorithm~\ref{algo:Clustering} is 
correct. Among all the sequences of pairs $(c_i,p_i) \in C \times P$, 
one induces $\gamma$. Therefore, an immediate induction on the recursive
calls yields the result.

\subsection{Running Time}
\begin{lemma}
  \label{lemma:upperbound:RT}
  The running time of Algorithm~\ref{algo:Clustering} on
  a 2-dimensional instance $(C,A,k)$ of the $k$-median problem 
  is at most $|A|(|C|)^{O(\sqrt{k})}$.
\end{lemma}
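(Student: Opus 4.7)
The plan is to derive a recurrence for the worst-case running time $T(k)$ of Algorithm~\ref{algo:Clustering} on instances with at most $|C|$ candidates and $|A|$ clients, and then unroll it. First I will bound the work done in the body of a single call (excluding the recursive calls). The main loop iterates over all valid separating curves of length at most $\ell := \lceil \sqrt{4.5k}\rceil$; since $|P| = O(|C|^3)$ and each curve is a sequence of $\ell$ pairs $(c_i,p_i)\in C\times P$, the number of candidate sequences is at most $(|C|\cdot |C|^3)^{\ell} = |C|^{O(\sqrt{k})}$. Checking simplicity of $\gamma$ and splitting $C$, $A$ according to $\ins(\gamma)$ and $\out(\gamma)$ is polynomial in $|A|+|C|$ per curve.

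Next, within each outer iteration, the inner loop ranges over $O(k)$ values of $k'$, and each value triggers two recursive calls. Crucially, the explicit restriction $k/3-\ell \le k' \le 2k/3$ ensures that both $k'$ and $k-\ell-k'$ are at most $2k/3$, so each recursive call has parameter at most $2k/3$. Combining these observations gives the recurrence
\[
T(k) \;\le\; |C|^{c\sqrt{k}} \cdot O(k) \cdot 2\,T(2k/3) + \mathrm{poly}(|A|,|C|)
\]
for some constant $c>0$, with base case $T(O(1)) \le |A|\cdot |C|^{O(1)}$ obtained by brute-force enumeration over candidate subsets of size $O(1)$.

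To solve the recurrence I will unroll it over the $O(\log k)$ levels of the recursion tree. Writing $k_i := (2/3)^i k$ for the parameter at depth $i$, the exponents of $|C|$ contributed by successive levels form a geometric series, whose sum is
\[
c\sum_{i\ge 0} \sqrt{k_i} \;=\; c\sqrt{k}\sum_{i\ge 0}\bigl(\sqrt{2/3}\bigr)^i \;=\; O(\sqrt{k}),
\]
because $\sqrt{2/3}<1$. The product of the per-level polynomial factors $O(k)\cdot 2$ over $O(\log k)$ levels is at most $2^{O(\log^2 k)}$, which is absorbed into $|C|^{O(\sqrt{k})}$ (for $|C|\ge 2$). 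Finally, the $|A|$ factor propagates unchanged from the leaves, yielding $T(k) \le |A|\cdot |C|^{O(\sqrt{k})}$, as claimed.

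The only genuinely content-bearing step is the geometric-series summation; the rest of the argument is bookkeeping. The main obstacle I anticipate is being careful that the $|A|$ factor is not inflated by the branching: since at each level the sets of clients assigned to the inside and outside subinstances are disjoint subsets of $A$, the $|A|$ in the recurrence should be taken as a uniform upper bound and the linear dependence on $|A|$ is preserved through the induction.
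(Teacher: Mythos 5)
Your proof is correct and follows essentially the same approach as the paper: count the separating curves to get $|C|^{O(\sqrt{k})}$ per level, note that both recursive calls have parameter at most $2k/3$, write down the recurrence $T(k)\le |C|^{O(\sqrt{k})}\cdot O(k)\cdot T(2k/3)$, and solve it. The paper states the recurrence and immediately concludes; you fill in the (correct) geometric-series argument showing $\sum_i \sqrt{(2/3)^i k}=O(\sqrt{k})$ and the absorption of the $2^{O(\log^2 k)}$ polynomial overhead into $|C|^{O(\sqrt{k})}$, which is a welcome bit of extra rigor but not a different route.
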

\begin{proof}
Recall that we are working in a computational model where sums of square roots can be compared efficiently, which allows us to compare sums of distances, even though they might be irrational.
  
  Observe first that the cost of a solution can be evaluated in time
  $O(|A| \cdot k)$. Hence, the final recursive call takes time 
  at most $|A| \cdot |C|^{O(1)}$.

  Since $P$ has size $O(\left|C\right|^3)$, there are only $O((\left|C\right|^{3})^{2\sqrt{k}})$ choices of a valid separating curve,
which is at most $\left|C\right|^{O(\sqrt{k})}$.

  We now consider the general recursive calls, which yield the following recurrence:
  $T_k \le |C|^{O(\sqrt{k})} 2k T_{2k/3}$ (where $T_k$ denotes the running time of the algorithm when given $k$ as input). Hence,
  the running time is at most $|A||C|^{O(\sqrt{k})}$.
\end{proof}

\begin{figure}[ht!]
\begin{center}
    \begin{tikzpicture*}{0.3\textwidth}
        \begin{scope}[
            vertex/.style={
                draw,
                circle,
                minimum size=2mm,
                inner sep=0pt,
                outer sep=0pt%
            },
            tinyvertex/.style={
                draw,
                circle,
                minimum size=1mm,
                inner sep=0pt,
                outer sep=0pt%
            },
            every label/.append style={
                rectangle,
                %label distance=.1cm,
                font=\small,
            }
            ]

\node[vertex,label={above:$c_i$}] (c1) at (0,2) {};
\node[vertex,label={above:$c_{i+1}$}] (c2) at (10,2) {};
\node[vertex,label={below right:$p_i$}] (p1) at (5,1) {};
\node[tinyvertex] (c3) at (5,4) {};

 \draw plot coordinates {(0,2) (5,1) (10,2)}; 
  
 \draw plot coordinates {(5,0) (5,1)}; 
 \draw plot coordinates {(3,3) (5,1) (7,3)}; 
 \draw plot coordinates {(3.7,-1) (5,0) (6.3,-1)}; 
\end{scope}
\end{tikzpicture*}
\end{center}
    %    \vspace{-0.25cm}
    \caption{A segment of a valid separating curve: $c_i - p_i - c_{i+1}$.}
\label{fig:voronoi}
\end{figure}
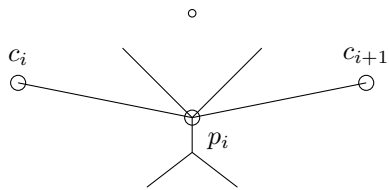

\subparagraph*{Acknowledgments} We are grateful to Dominique Attali for very helpful discussions.

\bibliographystyle{siam}
\bibliography{hardness}

\begin{thebibliography}{10}

\bibitem{ackermann2008clustering}
{\sc M.~R. Ackermann and C.~Sohler}, {\em Clustering for metric and non-metric
  distance measures}, in Proceedings of the 19th annual ACM-SIAM Symposium on
  Discrete Algorithms, 2008, pp.~799--808.

\bibitem{ANSW16}
{\sc S.~Ahmadian, A.~Norouzi{-}Fard, O.~Svensson, and J.~Ward}, {\em Better
  guarantees for k-means and {E}uclidean k-median by primal-dual algorithms},
  in Proceedings of the 58th annual {IEEE} Symposium on Foundations of Computer
  Science, 2017, pp.~61--72.

\bibitem{ARR98}
{\sc S.~Arora, P.~Raghavan, and S.~Rao}, {\em Approximation schemes for
  {E}uclidean k-medians and related problems}, in Proceedings of the 30th
  annual ACM Symposium on Theory of Computing, 1998, pp.~106--113.

\bibitem{ACKS15}
{\sc P.~Awasthi, M.~Charikar, R.~Krishnaswamy, and A.~K. Sinop}, {\em The
  hardness of approximation of {E}uclidean k-means}, in Proceedings of the 31st
  International Symposium on Computational Geometry, 2015, pp.~754--767.

\bibitem{badoiu}
{\sc M.~B{\=a}doiu, S.~Har-Peled, and P.~Indyk}, {\em Approximate clustering
  via core-sets}, in Proceedings of the 34th annual ACM Symposium on Theory of
  Computing, 2002, pp.~250--257.

\bibitem{Bajaj1988}
{\sc C.~Bajaj}, {\em The algebraic degree of geometric optimization problems},
  Discrete \& Computational Geometry, 3 (1988), pp.~177--191.

\bibitem{BPRST15}
{\sc J.~Byrka, T.~Pensyl, B.~Rybicki, A.~Srinivasan, and K.~Trinh}, {\em An
  improved approximation for k-median, and positive correlation in budgeted
  optimization}, in Proceedings of the 26th annual {ACM-SIAM} Symposium on
  Discrete Algorithms, 2015, pp.~737--756.

\bibitem{cabello2008geometric}
{\sc S.~Cabello, P.~Giannopoulos, C.~Knauer, and G.~Rote}, {\em Geometric
  clustering: Fixed-parameter tractability and lower bounds with respect to the
  dimension}, in Proceedings of the 19th annual ACM-SIAM Symposium on Discrete
  Algorithms, 2008, pp.~836--843.

\bibitem{charikar01}
{\sc M.~Charikar, S.~Khuller, D.~M. Mount, and G.~Narasimhan}, {\em Algorithms
  for facility location problems with outliers}, in Proceedings of the 12th
  annual ACM-SIAM Symposium on Discrete Algorithms, 2001, pp.~642--651.

\bibitem{CAKM16}
{\sc V.~Cohen-Addad, P.~N. Klein, and C.~Mathieu}, {\em Local search yields
  approximation schemes for k-means and k-median in {E}uclidean and minor-free
  metrics}, in Proceedings of the 57th annual IEEE Symposium on Foundations of
  Computer Science, 2016, pp.~353--364.

\bibitem{curtiss}
{\sc D.~R. Curtiss}, {\em Recent extensions of {D}escartes' rule of signs},
  Annals of Mathematics, 19 (1918), pp.~251--278.

\bibitem{CFKLMPPS15}
{\sc M.~Cygan, F.~V. Fomin, L.~Kowalik, D.~Lokshtanov, D.~Marx, M.~Pilipczuk,
  M.~Pilipczuk, and S.~Saurabh}, {\em Parameterized Algorithms}, Springer,
  2015.

\bibitem{DaF09}
{\sc S.~Dasgupta and Y.~Freund}, {\em Random projection trees for vector
  quantization}, IEEE Transactions on Information Theory, 55 (2009),
  pp.~3229--3242.

\bibitem{VegaKKR03}
{\sc W.~F. de~la Vega, M.~Karpinski, C.~Kenyon, and Y.~Rabani}, {\em
  Approximation schemes for clustering problems}, in Proceedings of the 35th
  annual ACM Symposium on Theory of Computing, 2003, pp.~50--58.

\bibitem{edelsbrunner2014short}
{\sc H.~Edelsbrunner}, {\em A short course in computational geometry and
  topology}, Springer, 2014.

\bibitem{Erickson2003}
{\sc J.~Erickson}, {\em Nice point sets can have nasty {D}elaunay
  triangulations}, Discrete {\&} Computational Geometry, 30 (2003),
  pp.~109--132.

\bibitem{FeL11}
{\sc D.~Feldman and M.~Langberg}, {\em A unified framework for approximating
  and clustering data}, in Proceedings of the 43rd annual ACM Symposium on
  Theory of Computing, 2011, pp.~569--578.

\bibitem{FMS07}
{\sc D.~Feldman, M.~Monemizadeh, and C.~Sohler}, {\em A {PTAS} for k-means
  clustering based on weak coresets}, in Proceedings of the 23rd annual
  Symposium on Computational Geometry, 2007, pp.~11--18.

\bibitem{FRS16a}
{\sc Z.~Friggstad, M.~Rezapour, and M.~R. Salavatipour}, {\em Local search
  yields a {PTAS} for k-means in doubling metrics}, in Proceedings of the 57th
  annual IEEE Symposium on Foundations of Computer Science, 2016, pp.~365--374.

\bibitem{gibson2008clustering}
{\sc M.~Gibson, G.~Kanade, E.~Krohn, I.~A. Pirwani, and K.~Varadarajan}, {\em
  On clustering to minimize the sum of radii}, in Proceedings of the 19th
  annual ACM-SIAM Symposium on Discrete Algorithms, 2008, pp.~819--825.

\bibitem{GuK99}
{\sc S.~Guha and S.~Khuller}, {\em Greedy strikes back: Improved facility
  location algorithms}, Journal of Algorithms, 31 (1999), pp.~228--248.

\bibitem{PVChard}
{\sc J.~Guo, R.~Niedermeier, and S.~Wernicke}, {\em Parameterized complexity of
  generalized vertex cover problems}, in Proceedings of the 9th International
  Workshop on Algorithms and Data Structures, 2005, pp.~36--48.

\bibitem{GI03}
{\sc V.~Guruswami and P.~Indyk}, {\em Embeddings and non-approximability of
  geometric problems}, in Proceedings of the 14th annual ACM-SIAM Symposium on
  Discrete Algorithms, 2003, pp.~537--538.

\bibitem{HaK07}
{\sc S.~Har{-}Peled and A.~Kushal}, {\em Smaller coresets for k-median and
  k-means clustering}, Discrete {\&} Computational Geometry, 37 (2007),
  pp.~3--19.

\bibitem{HaM04}
{\sc S.~Har-Peled and S.~Mazumdar}, {\em On coresets for k-means and k-median
  clustering}, in Proceedings of the 36th annual ACM Symposium on Theory of
  Computing, 2004, pp.~291--300.

\bibitem{IPZ98}
{\sc R.~Impagliazzo, R.~Paturi, and F.~Zane}, {\em Which problems have strongly
  exponential complexity?}, in Proceedings of the 39th annual IEEE Symposium on
  Foundations of Computer Science, 1998, pp.~653--662.

\bibitem{JMS02}
{\sc K.~Jain, M.~Mahdian, and A.~Saberi}, {\em A new greedy approach for
  facility location problems}, in Proceedings of the 34th annual ACM Symposium
  on Theory of Computing, 2002, pp.~731--740.

\bibitem{JaV01}
{\sc K.~Jain and V.~V. Vazirani}, {\em Approximation algorithms for metric
  facility location and k-median problems using the primal-dual schema and
  {L}agrangian relaxation}, Journal of the ACM, 48 (2001), pp.~274--296.

\bibitem{KMNPSW04}
{\sc T.~Kanungo, D.~M. Mount, N.~S. Netanyahu, C.~D. Piatko, R.~Silverman, and
  A.~Y. Wu}, {\em A local search approximation algorithm for k-means
  clustering}, Computational Geometry, 28 (2004), pp.~89--112.

\bibitem{KoR07}
{\sc S.~G. Kolliopoulos and S.~Rao}, {\em A nearly linear-time approximation
  scheme for the {E}uclidean k-median problem}, SIAM Journal on Computing, 37
  (2007), pp.~757--782.

\bibitem{Kumar2004}
{\sc A.~Kumar, Y.~Sabharwal, and S.~Sen}, {\em A simple linear time
  (1+$\epsilon$)-approximation algorithm for k-means clustering in any
  dimensions}, in Proceedings of the 45th annual IEEE Symposium on Foundations
  of Computer Science, 2004, pp.~454--462.

\bibitem{kumar2005linear}
{\sc A.~Kumar, Y.~Sabharwal, and S.~Sen}, {\em Linear time algorithms for
  clustering problems in any dimensions}, in Proceedings of the 32nd annual
  International Colloquium on Automata, Languages, and Programming, 2005,
  pp.~1374--1385.

\bibitem{KSS10}
{\sc A.~Kumar, Y.~Sabharwal, and S.~Sen}, {\em Linear-time approximation
  schemes for clustering problems in any dimensions}, Journal of the ACM, 57
  (2010), pp.~1--32.

\bibitem{LiS13}
{\sc S.~Li and O.~Svensson}, {\em Approximating k-median via
  pseudo-approximation}, in Proceedings of the 45th annual ACM Symposium on
  Theory of Computing, 2013, pp.~901--910.

\bibitem{Lichman:2013}
{\sc M.~Lichman}, {\em {UCI} machine learning repository}, 2013.

\bibitem{MNV12}
{\sc M.~Mahajan, P.~Nimbhorkar, and K.~R. Varadarajan}, {\em The planar k-means
  problem is {NP}-hard}, Theoretical Computer Science, 442 (2012), pp.~13--21.

\bibitem{marx2007optimality}
{\sc D.~Marx}, {\em On the optimality of planar and geometric approximation
  schemes}, in Proceedings of the 48th annual IEEE Symposium on Foundations of
  Computer Science, 2007, pp.~338--348.

\bibitem{MarxP15}
{\sc D.~Marx and M.~Pilipczuk}, {\em Optimal parameterized algorithms for
  planar facility location problems using {V}oronoi diagrams}, in Proceedings
  of the 23rd annual European Symposium on Algorithms, 2015, pp.~865--877.

\bibitem{MP15}
{\sc D.~Marx and M.~Pilipczuk}, {\em Optimal parameterized algorithms for
  planar facility location problems using {V}oronoi diagrams}, CoRR,
  abs/1504.05476 (2015).

\bibitem{MS14}
{\sc D.~Marx and A.~Sidiropoulos}, {\em The limited blessing of low
  dimensionality: when 1-1/d is the best possible exponent for d-dimensional
  geometric problems}, in Proceedings of the 30th annual ACM Symposium on
  Computational Geometry, 2014, pp.~67--76.

\bibitem{Mat00}
{\sc J.~Matousek}, {\em On approximate geometric k-clustering}, Discrete {\&}
  Computational Geometry, 24 (2000), pp.~61--84.

\bibitem{MeS84}
{\sc N.~Megiddo and K.~J. Supowit}, {\em On the complexity of some common
  geometric location problems}, SIAM Journal on Computing, 13 (1984),
  pp.~182--196.

\bibitem{MeP03}
{\sc R.~R. Mettu and C.~G. Plaxton}, {\em The online median problem}, SIAM
  Journal on Computing, 32 (2003), pp.~816--832.

\bibitem{scikit-learn}
{\sc F.~Pedregosa, G.~Varoquaux, A.~Gramfort, V.~Michel, B.~Thirion, O.~Grisel,
  M.~Blondel, P.~Prettenhofer, R.~Weiss, V.~Dubourg, J.~Vanderplas, A.~Passos,
  D.~Cournapeau, M.~Brucher, M.~Perrot, and E.~Duchesnay}, {\em Scikit-learn:
  Machine learning in {P}ython}, Journal of Machine Learning Research, 12
  (2011), pp.~2825--2830.

\bibitem{sahni1976p}
{\sc S.~Sahni and T.~Gonzalez}, {\em P-complete approximation problems},
  Journal of the ACM, 23 (1976), pp.~555--565.

\bibitem{seidel1990exact}
{\sc R.~Seidel}, {\em Exact upper bounds for the number of faces in
  d-dimensional {V}oronoi diagrams}, in Applied Geometry and Discrete
  Mathematics, 1990, pp.~517--530.

\bibitem{steinbach2000comparison}
{\sc M.~Steinbach, G.~Karypis, and V.~Kumar}, {\em A comparison of document
  clustering techniques}, in Proceedings of the 6th ACM SIGKDD International
  Conference on Knowledge Discovery and Data Mining Workshop on Text Mining,
  2000, pp.~525--526.

\end{thebibliography}

\end{document}